\newtheorem{thm}{Theorem}[subsection]
\newtheorem{theorem}[thm]{Theorem}
\newtheorem{proposition}[thm]{Proposition}
\newtheorem{corollary}[thm]{Corollary}
\newtheorem{remark}[thm]{Remark}
\newtheorem{lemma}[thm]{Lemma}
\DeclareMathOperator{\tr}{tr}
\DeclareMathOperator{\End}{End}
\DeclareMathOperator{\sdet}{sdet}
\title{\textbf{On the $3D$ consistency of a Grassmann extended lattice Boussinesq system}}
\author{Sotiris Konstantinou-Rizos\thanks{skonstantin84@gmail.com, s.konstantinu.rizos@uniyar.ac.ru}}
\affil{Centre of Integrable Systems, P.G. Demidov Yaroslavl State 
University, Russia}
\begin{document}

\maketitle

\begin{abstract}
In this paper, we formulate a ``Grassmann extension'' scheme for constructing noncommutative (Grassmann) extensions of Yang-Baxter maps together with their associated systems P$\Delta$Es, based on the ideas presented in \cite{Sokor-Kouloukas}. Using this scheme, we first construct a Grassmann extension of a Yang-Baxter map which constitutes a lift of a lattice Boussinesq system. The Grassmann-extended Yang-Baxter map can be squeezed down to a novel, integrable, Grassmann lattice Boussinesq system, and  we derive its $3D$-consistent limit. We show that some systems retain their $3D$-consistency property in their Grassmann extension.
\end{abstract}

\hspace{.2cm} \textbf{PACS numbers:} 02.30.Ik 

\hspace{.2cm} \textbf{Mathematics Subject Classification:} 15A75, 35Q53, 39A14, 81R12.

\hspace{.2cm} \textbf{Keywords:} Noncommutative Boussinesq lattice system, Grassmann extensions of Yang-Baxter

\hspace{.2cm} maps, quad-graph systems, Grassmann algebras, Grassmann extensions of discrete integrable sys-

\hspace{.2cm} tems.

\section{Introduction}
Over the past few decades, there has been an increasing interest in the study of noncommutative extensions of integrable equations or systems of equations (indicatively we refer to \cite{Kulish, Dimakis, Grisaru-Penati, Hamanaka-Toda, Lechtenfeld, Sokolov1998}), due to their numerous applications in Physics. Famous examples include noncommutative analogues of the KdV, the NLS, the sine-Gordon and other well-celebrated equations of Mathematical Physics. Therefore, it is quite important to develop methods for solving such -- noncommutative -- systems.

On the other hand, in the commutative case, plenty of methods have been discovered for solving discrete integrable systems (see \cite{Hiet-Frank-Joshi} and the references therein). One of the most well-studied and important class of such systems are the so-called ``quad-graph systems'', namely systems of difference equations defined on an elementary quadrilateral of the two-dimensional lattice. For those quad-graph systems which possess the ``$3D$ consistency'' property, B\"acklund transformations can be derived automatically, and therefore interesting solutions can be constructed starting from trivial ones. Due to the useful properties of $3D$ consistent quad-graph systems and the availability of simple algebraic schemes for contructing solutions to them, they can be used as good models for studying their continuous analogues, i.e. systems of nonlinear PDEs, via continuum limits. At the same time, $3D$ consistent quad-graph systems are strongly related to Yang-Baxter maps,  namely solutions to the set-theoretical Yang-Baxter equation, one of the most fundamental equation of Mathematical Physics. This is a quite important connection, and a lot of work has been done in this direction (indicatively we refer to \cite{ABS-2005, Caudrelier, Pavlos, Pap-Tongas0, Pap-Tongas-Veselov}).

The importance of noncommutative extensions of integrable systems from a Physics perspective, and the innovating results that have already been obtained in the continuously developing field of Discrete Integrable Systems, motivates us to extend to the noncommutative case the already existing methods for constructing solutions to integrable systems in the commutative case. Towards this direction, a few steps have been made over the past few years. In particular, in the recent work of Grahovksi and Mikhailov \cite{Georgi-Sasha}, integrable discretisations were found for a class of NLS equations on Grassmann algebras. This motivated the construction of Grassmann extended systems of differential-difference and difference-difference equations  \cite{Xue-Levi-Liu, Xue-Liu, Xue-Liu-2}, as well as the consideration of continuum limits of Grassmann extended difference equations (see, for instance, \cite{Mao-Liu, Mao-Liu-2}). Furthermore, the latter results and the aforementioned strict relation between quad-graph systems and Yang-Baxter maps motivated the beginning of the extension of the theory of Yang-Baxter maps on Grassmann algebras \cite{GKM, Sokor-Sasha-2}. In addition, a Grassmann extension of the discrete potential KdV equation together with its associate Yang-Baxter map were constructed in \cite{Sokor-Kouloukas}. 

In this paper, motivated by the above-mentioned developments and the results obtained in  \cite{Sokor-Kouloukas}, we formulate a scheme for constructing noncommutative (Grassmann) extensions of quad-graph systems together with their associated Grassmann extended Yang-Baxter maps. Moreover, we answer the main question which arose in \cite{Sokor-Kouloukas} on whether the noncommutativity ``kills'' the $3D$ consistency property for all quad-graph systems. In particular, the Grassmann extended discrete potential KdV system which was constructed in \cite{Sokor-Kouloukas} does not have the $3D$ consistency property. However, this in not the case for all the Grassmann extended integrable systems; in fact, in this paper, we construct a Grassmann extension of a Boussinesq system which retains the $3D$ consistency of its original, commutative version.

As an illustrative example for the description of our scheme, we consider a discrete Boussinesq system. The Boussinesq equation, in both its continuous and its discrete (lattice Boussinesq) version, has been studied extensively over the past few decades, earning its place on the list of fundamental equations of Mathematical Physics. It owes its popularity to its quite interesting and, also, simple form, with a number of applications in Fluid Dynamics and in the theory of Integrable Systems.

\subsection{Main results}
This paper is concerned with the formulation of a scheme for constructing Grassmann extensions of quad-graph systems and their associated Yang-Baxter maps. The methods in this scheme are demonstrated via the following Boussinesq system of difference equations for $p_{n,m}=p(n,m)$, $q_{n,m}=q(n,m)$, $n,m\in\mathbb{N}$:
\begin{align}\label{BSQ-quad}
&(p_{n,m+1}-p_{n+1,m})(p_{n,m}+q_{n,m}q_{n+1,m+1}-r_{n+1,m+1})=(a-b)q_{n,m},\nonumber\\
&(q_{n,m+1}-q_{n+1,m})(p_{n,m}+q_{n,m}q_{n+1,m+1}-r_{n+1,m+1})=b-a,\\
&(r_{n,m+1}-r_{n+1,m})(p_{n,m}+q_{n,m}q_{n+1,m+1}-r_{n+1,m+1})=(b-a)q_{n+1,m+1},\nonumber
\end{align}
where $a,b\in\mathbb{C}$ (see various forms of this system \cite{Bridgman, NPCQ-92, Tongas-Nijhoff}). In fact, we construct and study the integrability of its noncommutive extension not only in terms of possessing a Lax representation, but also as a $3D$ consistent system \cite{Bobenko-Suris, Frank4}. We also derive the associated Yang-Baxter map.

To conclude, we state what is new in this paper:
\begin{enumerate}
\item The formulation of the ideas presented in \cite{Sokor-Kouloukas} into a \textit{Grassmann extension scheme};
\item The derivation of a new Boussinesq-type Yang-Baxter map together with its Grassmann extension;
\item The construction of an integrable, noncommutative (Grassmann) extension of a discrete Boussinesq system and its $3D$-consistent limit. The latter gives rise to the following important point.
\item We show that, for some systems, the $3D$-consistency property does not break in their noncommutative extension.
\end{enumerate}

\subsection{Organisation of the paper}
The paper is organised as follows: The next section provides with preliminary knowledge, essential for the text to be self-contained. In particular, we fix the notation that we use throughout the text, and we give the basic definitions of quad-graph systems and Yang-Baxter maps. Furthermore, we demonstrate the relation between the former and the latter, and the relation between the $3D$ consistency property and the Yang-Baxter equation. We also explain what a Lax representation is for both quad-graph equations and Yang-Baxter maps. Finally, we provide the basic properties of Grassmann algebras, which are essential for this text, and present the basic steps of a simple scheme for constructing Grassmann extensions of discrete integrable systems together with their associated Yang-Baxter maps; the related ideas were discussed in \cite{Sokor-Kouloukas}. In section \ref{Grassmann-BSQ_system}, we apply the aforementioned scheme to system \eqref{BSQ-quad}. Specifically, we consider the associated Yang-Baxter lift of \eqref{BSQ-quad}, for which we construct a Grassmann extension. Then, we show that the latter can be squeezed down to a novel integrable system of lattice equations which can be considered as the Grassmann extension of system \eqref{BSQ-quad}. Finally, in section \ref{3D-consistency}, we present a Boussinesq-type system associated via a conservation law of the one obtained in section \ref{Grassmann-BSQ_system}, and we prove the integrability--in the sense of $3D$-consistency--for a limit of this system. Finally, the last section deals with some concluding remarks and thoughts for future work. 

\section{Preliminaries}\label{Preliminaries}
\subsection{Notation}\label{notation}
Here, we explain the notation we shall be using throughout the text. 

\subsubsection{Functions of discrete variables and shifts}\label{notation}
Let $f$ be a function of two discrete variables $n$ and $m$, i.e $f=f(n,m)$. Let also $\mathcal{S}$ and $\mathcal{T}$ be the shift operators in the $n$ and $m$ direction of a two-dimensional lattice, respectively. We adopt the notation: $f_{00}\equiv f$, $f_{ij}=S^iT^j f$; for example, $f_{10}=f(n+1,m)$, $f_{01}=f(n,m+1)$ and $f_{11}=w(n+1,m+1)$ as represented in Figure \ref{quad-cube}. Furthermore, if our field $f$ lives on the three-dimensional lattice, namely $f=f(n,m,k)$, and $\mathcal{Z}$ is the shift operator in the $k$-direction, then we shall be using three indices to determine the position of $f$ on the lattice. That is, $f_{ijl}=\mathcal{S}^i\mathcal{T}^j \mathcal{Z}^k f(n,m,k)$. For instance, $f_{101}=w(n+1,m,k+1)$ as in Figure \ref{quad-cube}.

\begin{figure}[ht]
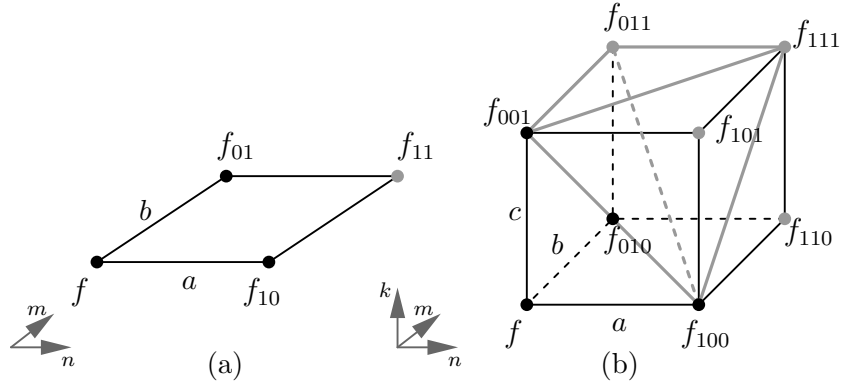

\centering
\centertexdraw{ 
\setunitscale 0.45
\move (-3 0.5)  \lvec(-1 0.5) \lvec(.5 1.5) \lvec(-1.5 1.5) \lvec(-3 0.5)
\lpatt()
\move (2 0)  \lvec(4 0) \lvec(5 1) 
\lpatt(0.067 0.1) \lvec(3 1) \lvec(2 0) 
\lpatt() \lvec(2 2) \lvec(3 3) 
\lpatt (0.067 0.1) \lvec(3 1) 
\lpatt() \move (3 3) \lvec(5 3) \lvec(4 2) \lvec(4 0) 
\move (2 2) \lvec(4 2)
\move (5 3) \lvec(5 1)

\htext (2.9 .6) {\large$f_{010}$}
\move(2 2)\setgray 0.6 \linewd 0.04 \lvec(4 0)  \lpatt (0.067 0.1) \lvec(3 3)  \lpatt ()
\move(4 0)\setgray 0.6 \linewd 0.04 \lvec(5 3)\lvec(2 2)
\move(5 3)\setgray 0.6 \linewd 0.04 \lvec(3 3)\lvec(2 2)

\move (-3 .5) \fcir f:0.0 r:0.075
\move (-1 .5) \fcir f:0.0 r:0.075
\move (.5 1.5) \fcir f:0.6 r:0.075
\move (-1.5 1.5) \fcir f:0.0 r:0.075
\move (2 0) \fcir f:0.0 r:0.075
\move (4 0) \fcir f:0.0 r:0.075
\move (5 1) \fcir f:0.6 r:0.075

\move (2 2) \fcir f:0.0 r:0.075
\move (4 2) \fcir f:0.6 r:0.075
\move (5 3) \fcir f:0.6 r:0.075
\move (3 3) \fcir f:0.6 r:0.075
\move (3 1) \fcir f:0.0 r:0.075

\move(-4 -.5) \linewd 0.02 \setgray 0.4 \arrowheadtype t:F \avec(-3.3 -.5) 
\move(-4 -.5) \linewd 0.02 \setgray 0.4 \arrowheadtype t:F \avec(-3.5 -.1) 
\htext (-3.4 -.7) {\scriptsize{$n$}}
\htext (-3.8 -.1) {\scriptsize{$m$}}

\move(0.5 -.5) \linewd 0.02 \setgray 0.4 \arrowheadtype t:F \avec(1.2 -.5) 
\move(0.5 -.5) \linewd 0.02 \setgray 0.4 \arrowheadtype t:F \avec(1 -.1) 
\move(0.5 -.5) \linewd 0.02 \setgray 0.4 \arrowheadtype t:F \avec(.5 .2) 
\htext (1.1 -.7) {\scriptsize{$n$}}
\htext (.7 -.1) {\scriptsize{$m$}}
\htext (.3 .1) {\scriptsize{$k$}}

\htext (-3.3 0) {\large$f$}
\htext (-1.3 0) {\large$f_{10}$}
\htext (-1.6 1.7) {\large$f_{01}$}
\htext (.5 1.7) {\large$f_{11}$}
\htext (-2 0.2) {$a$}
\htext (-2.5 1) {$b$}

\htext (1.75 -.5) {\large$f$}
\htext (3.8 -.5) {\large$f_{100}$}
\htext (1.5 2.1) {\large$f_{001}$}
\htext (5 .7) {\large$f_{110}$}
\htext (4.2 1.9) {\large$f_{101}$}
\htext (5.1 3) {\large$f_{111}$}
\htext (2.9 3.2) {\large$f_{011}$}

\htext (3 -.3) {$a$}
\htext (2.3 0.6) {$b$}
\htext (1.8 1) {$c$}

\textref h:C v:C \htext(3.1 -0.7){(b)}
\textref h:C v:C \htext(-1.5 -0.7){(a)}
}
\caption{(a) Elementary square of the $2D$ lattice and (b) elementary cube of the $3D$ lattice.}\label{quad-cube}
\end{figure}

\subsubsection{Commutative and anticommutative variables} 
We shall be using Latin letters for all commuting variables, whereas all the anticommutative variables will be denoted by Greek letters. For instance, $pq=qp$, whereas $\tau\theta=-\theta\tau$. As an exception, the spectral parameter, $\lambda\in\mathbb{C}$, is a commuting variable.

\subsection{$3D$ consistency VS the Yang-Baxter equation}
``Quad-graph'' equations (or systems) and ``Yang-Baxter maps'' constitute the two sides of the same coin. In this section, we explain the relation between the $3D$ consistency property and the Yang-Baxter equation.

\subsubsection{Quad-graph equations and parametric Yang-Baxter maps}
Using the notation introduced in section \ref{notation}, let the fields ($f$, $f_{10}$, $f_{01}$, $f_{11}$) lie on the vertices of the square in Figure \ref{quad-cube}. Let us also consider the following equation
\begin{equation}\label{eq-quad-graph}
Q(f,f_{10},f_{01},f_{11};a,b)=0,
\end{equation}
where the parameters $a,b\in\mathbb{C}$ and $Q$ is a linear function in every field $f_{ij}$. Equation \eqref{eq-quad-graph} is called \textit{equation on quad-graph} and can be interpreted as in Figure \ref{quad-cube}-(a). That is, knowing any 3 of the fields $f_{ij}$ on the vertices, one can uniquely identify the fourth, using \eqref{eq-quad-graph}.

Now, by the term ``parametric Yang-Baxter map'' we understand set-theoretical solutions of the parametric Yang-Baxter equation, namely maps $Y_{a,b}\in\End(V\times V)$, where $V$ is algebraic variety, i.e.
\begin{equation}\label{Ymap}
(x,y)\stackrel{Y_{a,b}}{\mapsto}\left(u(x,y;a,b),v(x,y;a,b)\right),
\end{equation}
satisfying the parametric Yang-Baxter equation
\begin{equation}\label{YB_eq1}
Y^{23}_{b,c}\circ Y^{13}_{a,c} \circ Y^{12}_{a,b}=Y^{12}_{a,b}\circ Y^{13}_{a,c} \circ Y^{23}_{b,c}.
\end{equation}
The $Y^{ij}\in\End(V\times V\times V)$ are defined as: $Y^{12}_{a,b}=Y_{a,b}\times id$, $Y^{23}_{b,c}=id\times Y_{a,b}$ and $Y^{13}_{a,c}=\pi^{12} Y^{23}_{b,c} \pi^{12}$, where $\pi^{12}$ is the involution defined by $\pi^{12}((x;a),(y;b),(z;c))=((y;b),(x;a),(z;c))$. The geometric interpretation of these maps, can be understood in a similar way as for quad-graph equations, but with the values being considered on the edges of the quad, as in Figure \ref{quad-cube}-(b).

Similarly to quad-graph systems, we can also interpret Yang-Baxter maps on the square, but considering the values on the edges instead of the vertices (see Figure \ref{quadToYB}-(b)).

\begin{figure}[ht]
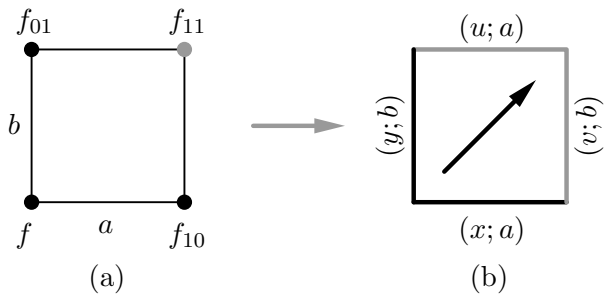

\centering
\centertexdraw{ 
\setunitscale .8
\move (0 0)  \lvec (1 0) \lvec(1 1) \lvec(0 1) \lvec(0 0)
\linewd 0.03 \move (1.45 .5)  \setgray 0.6 \arrowheadtype t:F \avec(2.05 0.5) \linewd 0 \setgray 0
\move (2.5 0) \linewd 0.03 \lvec (3.5 0)  \setgray 0.6 \lvec(3.5 1)  \lvec(2.5 1)  \setgray 0 \lvec(2.5 0) \linewd 0
\move (0 0) \fcir f:0.0 r:0.05
\move (1 0) \fcir f:0.0 r:0.05
\move (0 1) \fcir f:0.0 r:0.05
\move (1 1) \fcir f:0.6 r:0.05
\linewd 0.03 \move (2.7 0.2) \arrowheadtype t:F \avec(3.3 0.8) \linewd 0
\htext (-.1 -.3) {$f$}
\htext (.9 -.3) {$f_{10}$}
\htext (-.1 1.1) {$f_{01}$}
\htext (.9 1.1) {$f_{11}$}
\htext (0.44 -.2) {$a$}
\htext (-.15 .45) {$b$}
\htext (2.8 -.28) {$(x;a)$}
\vtext (2.4 .3) {$(y;b)$}
\htext (2.8 1.04) {$(u;a)$}
\vtext (3.7 .3) {$(v;b)$}

\textref h:C v:C \htext(3 -0.5){(b)}
\textref h:C v:C \htext(.5 -0.5){(a)}
}
\caption{Initial values on the (a) vertices, (b) edges.}\label{quadToYB}
\end{figure}

\subsubsection{Lax representations \& integrability}
Equation \eqref{eq-quad-graph} admits \textit{quad-Lax representation}, if there is a (Lax) matrix $L_a=L_a(f,f_{10};\lambda)\equiv L_a(f,f_{10})$\footnote{We usually skip writing explicitly the dependence on the spectral parameter $\lambda$.}, $\lambda\in\mathbb{C}$, such that
\begin{equation*}
L_a\left(\mathcal{T}f,\mathcal{T}f_{10}\right)L_b\left(f,f_{01}\right)=L_b\left(\mathcal{S}f,\mathcal{S}f_{01}\right)L_a\left(f,f_{10}\right),
\end{equation*}
where $\mathcal{S}$ and $\mathcal{T}$ are shift operators, as defined in section \ref{notation}. 

Similarly, for Yang-Baxter maps, Lax matrix is a matrix $L=L(x,a;\lambda)\equiv L_a(x)$ that satisfies the following matrix refactorisation problem \cite{Veselov2}
\begin{equation}\label{eqLax}
L_a(u)L_b(v)=L_b(y)L_a(x).
\end{equation}
If equation \eqref{eqLax} defines a map \eqref{Ymap}, then it is called Lax representation of the map. An alternative way to verify that a map satisfies the Yang-Baxter equation is to consider the following matrix trifactorisation problem
\begin{equation*}
L_a(u)L_b(v)L_c(w)=L_a(x)L_b(y)L_c(z),
\end{equation*}
where $L_a(x)$ is the same matrix satisfying \eqref{eqLax}. In particular, if the above  trifactorisation problem implies that $u=x$, $v=y$ and $w=z$, then map \eqref{Ymap} defined by \eqref{eqLax} satisfies the parametric Yang-Baxter equation \eqref{YB_eq1} \cite{Kouloukas2, Veselov}.

In the case of quad-graph equations or systems as \eqref{eq-quad-graph}, the possession of Lax representation is usually used as working definition of integrability. However, a stronger integrability criterion is that of $3D$-consistency \cite{Bobenko-Suris, Frank4} which implies integrability in the sense of Lax representation. 

From the analysis-point-of-view, $3D$-consistency is the property of equation \eqref{eq-quad-graph} to be consistently generalisable in three dimensions, by ``adding'' a third discrete variable $k$ in the field $f$, namely considering $f=f(n,m,k)$. Geometrically, it means that a quad-graph system--as interpreted in Figure \ref{quad-cube}-(a)--can be generalised and ``written'' in a consistent way on the faces of the cube of Figure \ref{quad-cube}-(b). That is, we first rewrite our system \eqref{eq-quad-graph} on the bottom, front and left side of the cube, respectively, as follows:
\begin{equation}\label{eqsOnCube}
Q(f,f_{100},f_{010},f_{110};a,b)=0,~~~Q(f,f_{100},f_{001},f_{101};a,c)=0,~~~Q(f,f_{001},f_{010},f_{011};c,b)=0.
\end{equation}
Then, considering $f$, $f_{100}$, $f_{010}$ and $f_{001}$ as initial values on the cube in Figure \ref{quad-cube}-(b), there are three ways to calculate $f_{111}$: 1) Using the first equation of \eqref{eqsOnCube}, determine $f_{110}$; 2) Using the second equation, determine $f_{101}$; 3) With use of the last equation of \eqref{eq-quad-graph}, determine $f_{011}$. Consequently, having $f_{110}$, $f_{101}$ and $f_{011}$ at our disposal, we can determine $f_{111}$, using any of the top, back or right side of the cube. $3D$-consistency means that, independently of which of the former sides we use, we obtain exactly the same value $f_{111}$.

The strict relation between the $3D$-consistency property and the Yang-Baxter equation can be demonstrated in Figure \ref{YB-3D}. In fact, one can consider three initial values $x$, $y$ and $z$ taken on the sides of the cube as in Figure \ref{YB-3D}. Now, acting on $(x,y,z)$ with the left part of the Yang-Baxter equation, that is, using the bottom, back and left side of the cube, we obtain new values $(\hat{\hat{x}},\hat{\hat{y}},\hat{\hat{z}})$. On the other hand, acting on $(x,y,z)$ with the right part of the Yang-Baxter equation, namely via the left, front and top side of the cube, we obtain the values $(\tilde{\tilde{x}},\tilde{\tilde{y}},\tilde{\tilde{z}})$. The Yang-Baxter equation is satisfied when the ``hats'' coincide with the ``tildes'' and vice versa.

\begin{figure}[ht]
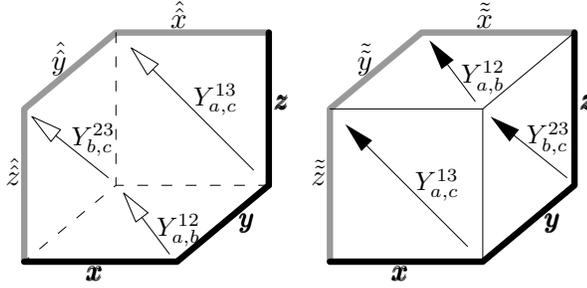

\centering
\centertexdraw{ 
\setunitscale .8
\move(1.6 1.5) \setgray 0.6 \linewd 0.04 \lvec (0.6 1.5) \lvec(0 1) \lvec(0 0) \setgray 0 \linewd 0
\move (0 0) \linewd 0.04 \lvec (1 0) \lvec(1.6 .5)\linewd 0 \lpatt(0.067 0.09) \lvec(0.6 .5) \lvec(0 0) 
\lpatt() \move(1.6 .5)\linewd 0.04 \lvec(1.6 1.5) \linewd 0
\lpatt(0.067 0.09)  \move(0.6 1.5)\lvec(0.6 .5) \lpatt()
\move(2 0)\setgray 0.6 \linewd 0.04 \lvec(2 1)\lvec(2.6 1.5)  \lvec(3.6 1.5)\setgray 0 \linewd 0
\move (2 0) \linewd 0.04 \lvec (3 0) \lvec(3.6 .5) \lvec(3.6 1.5) \linewd 0 \lvec(3 1) \lvec(3 0) 
\move(2 1)\lvec(3 1)
\move (1.5 .6)  \arrowheadtype t:T \avec(0.7 1.4) \linewd 0 \setgray 0 
\move (0.95 .05)  \arrowheadtype t:T \avec(0.65 0.45) \linewd 0 \setgray 0
\move (0.55 .55)  \arrowheadtype t:T \avec(0.05 0.95) \linewd 0 \setgray 0

 \move (2.9 .1)  \arrowheadtype t:F \avec(2.1 0.9) \linewd 0 \setgray 0
 \move (3.55 .55)  \arrowheadtype t:F \avec(3.05 0.95) \linewd 0 \setgray 0
\move (2.95 1.05)  \arrowheadtype t:F \avec(2.65 1.45) \linewd 0 \setgray 0
\htext (2.4 -.12) {\pmb{$x$}}
\htext (3.4 0.19) {\pmb{$y$}}
\htext (3.63 1) {\pmb{$z$}}
\htext (2.95 1.53) {$\tilde{\tilde{x}}$}
\htext (2.17 1.22) {$\tilde{\tilde{y}}$}
\htext (1.87 .5) {$\tilde{\tilde{z}}$}

\htext (.4 -.12) {\pmb{$x$}}
\htext (1.4 0.19) {\pmb{$y$}}
\htext (1.63 1) {\pmb{$z$}}
\htext (0.95 1.53) {$\hat{\hat{x}}$}
\htext (0.17 1.22) {$\hat{\hat{y}}$}
\htext (-.13 .5) {$\hat{\hat{z}}$}

\htext (2.55 .4) {\small$Y^{13}_{a,c}$}
\htext (3.28 .7) {\small$Y^{23}_{b,c}$}
\htext (2.85 1.1) {\small$Y^{12}_{a,b}$}

\htext (1.1 .96) {\small$Y^{13}_{a,c}$}
\htext (0.86 .1) {\small$Y^{12}_{a,b}$}
\htext (0.3 .69) {\small$Y^{23}_{b,c}$}

}
\caption{Yang-Baxter equation. Geometric interpretation.}\label{YB-3D}
\end{figure}

\subsection{Grassmann algebra}
Consider $G$ to be a $\mathbb{Z}_2$-graded algebra over $\mathbb{C}$. Thus, $G$, as a linear space, is a direct sum $G=G_0\oplus G_1$ (mod 2), such that $G_iG_j\subseteq G_{i+j}$. The elements of $G$ that belong either to $G_0$ or to $G_1$ are called \textit{homogeneous}, the ones in $G_1$ are called \textit{odd} (or {\sl fermionic}), while those in $G_0$ are called \textit{even} (or {\sl bosonic}).

By definition, the parity $|a|$ of an even homogeneous element $a$ is $0$, while it is $1$ for odd homogeneous elements. The parity of the product $|ab|$ of two homogeneous elements is a sum of their parities: $|ab|=|a|+|b|$. Now, for any homogeneous elements $a$ and $b$, Grassmann commutativity means that $ba=(-1)^{|a||b|}ab$ . This implies that if $\alpha\in G_1$, then $\alpha^2=0$, and 
$\alpha a=a \alpha$, for any $a\in G_0$. 

The notions of the determinant and the trace of a matrix in $G$ are defined for square matrices, $M$, of the following block-form
\begin{equation}\label{block-form}
M=\left(
\begin{matrix}
 P & \Pi \\
 \Lambda & L
\end{matrix}\right).
\end{equation}
The blocks $P$ and $L$ are matrices with even entries, while $\Pi$ and $\Lambda$ possess only odd entries (note that the block matrices are not necessarily square matrices).  In particular, the \textit{superdeterminant} of $M$, which is usually denoted by $\sdet(M)$, is defined to be the following quantity
\begin{equation*}
\sdet(M)=\det(P-\Pi L^{-1}\Lambda)\det(L^{-1})=\det(P^{-1})\det(L-\Lambda P^{-1}\Pi),
\end{equation*}
where $\det(\cdot)$ is the usual determinant of a matrix.

In this section, we gave all the definitions related to Grassmann algebras that are essential for this paper. However, for more information on Grassmann analysis one can consult \cite{Berezin}.

\subsection{Grassmann extension scheme}\label{Grassmann extension scheme}
Here, we demonstrate a scheme for constructing Grassmann extensions of discrete integrable systems together with their associated Grassmann extended Yang-Baxter maps. We formulate the ideas presented in \cite{Sokor-Kouloukas} which constitute a combination of the methods introduced in \cite{Pap-Tongas} and  \cite{GKM}.

\begin{figure}[ht]
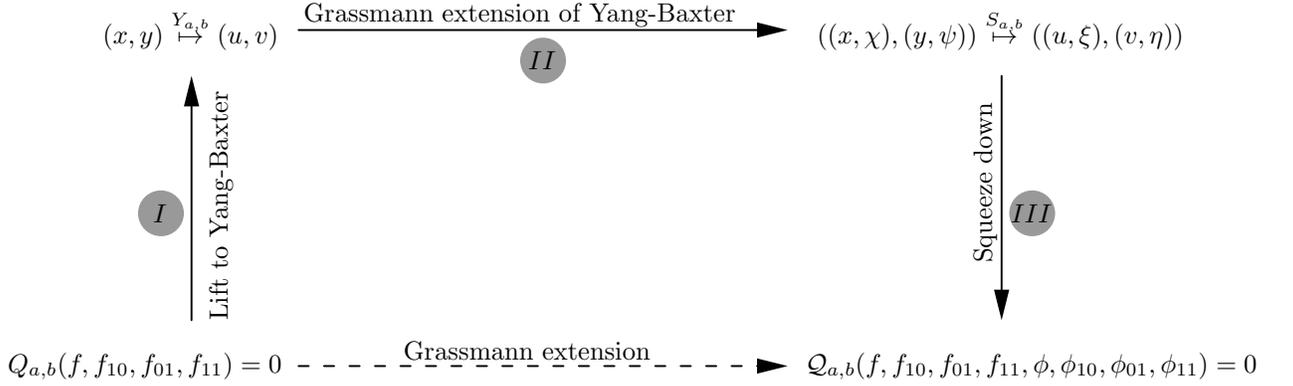

\centering
\centertexdraw{ 
\setunitscale 0.8
\move (-1.6 2.2)  \arrowheadtype t:F \avec(1.6 2.2)
\move (-2.3 0.3)  \arrowheadtype t:F \avec(-2.3 1.9)
\move (3 1.9)  \arrowheadtype t:F \avec(3 0.3) 
\textref h:C v:C \htext(-2.6 0){\small $Q_{a,b}(f,f_{10},f_{01},f_{11})=0$}
\textref h:C v:C \htext(3.2 0){\small $\mathcal{Q}_{a,b}(f,f_{10},f_{01},f_{11},\phi,\phi_{10},\phi_{01},\phi_{11})=0$}
\textref h:C v:C \htext(-2.3 2.2){\small $(x,y)\stackrel{Y_{a,b}}{\mapsto}(u,v)$}
\textref h:C v:C \htext(3 2.2){\small $((x,\chi),(y,\psi))\stackrel{S_{a,b}}{\mapsto}((u,\xi),(v,\eta))$}\lpatt(0.067 0.1)
\move (-1.6 0)  \arrowheadtype t:F \avec(1.6 0)
\move (-2.5 1)\fcir f:.6 r:.15
\textref h:C v:C \small{\htext(-2.5 1){$I$}}
\move (3.2 1)\fcir f:.6 r:.15
\textref h:C v:C \small{\htext(3.2 1){$III$}}
\move (0 2)\fcir f:.6 r:.15
\textref h:C v:C \small{\htext(0 2){$II$}}
\textref h:C v:C \small{\htext(-.1 .1){\small Grassmann extension}}
\textref h:C v:C \small{\htext(-.15 2.3){\small Grassmann extension of Yang-Baxter}}
\textref h:C v:C \small{\vtext(-2.1 1.05){\small Lift to Yang-Baxter}}
\textref h:C v:C \small{\vtext(2.9 1.2){\small Squeeze down}}
}
\caption{Grassmann extension scheme}\label{Grassmannscheme}
\label{sqr-cube}
\end{figure}

The scheme consists of three steps:\\
\textbf{I.} Starting from an integrable quad-graph equation, $Q_{a,b}(f,f_{10},f_{01},f_{11})=0$, derive a Yang-Baxter map using the symmetries of the equation in order to transfer from a one-field equation (with field $f$) to a two-field map (with fields $u$ and $v$) \cite{Pap-Tongas}, namely a map $(x,y)\stackrel{Y_{a,b}}{\mapsto}(u,v)$. Note that this method is reversible, when applied to equations of certain form.\\
\textbf{II.} The method was introduced in \cite{GKM, Georgi-Sasha}. Starting from map $Y_{a,b}$ obtained in \textbf{Step I}, construct its noncommutative--Grassmann--extension, namely map $((x,\chi),(y,\psi))\stackrel{S_{a,b}}{\mapsto}((u,\xi),(v,\eta))$. This extension is applicable to Yang-Baxter maps which admit Lax matrix, and it is based on the consideration of a more general Lax matrix which includes anticommutative variables. That is, we consider an augmented Lax matrix, $\mathcal{L}_a=\mathcal{L}_a(x,\chi)$, which contains the old (bosonic) ``$x$'' and the new (fermionic) ``$\chi$'' variables. Our demand is that this matrix satisfies two conditions: 1. Its bosonic limit is equal to the original Lax matrix (with only bosonic elements); that is, $\lim_{\chi\to 0}\mathcal{L}_a(x,\chi)=L_a(x)$.  2. Its superdeterminant is equal to the determinant of the original Lax matrix, i.e. $\sdet{\mathcal{L}_a}=\det{L_a}$.  Note that the augmented matrix $\mathcal{L}_a$ must be in the block-form \eqref{block-form} in order to be able to define its determinant. This method will be demonstrated in the next section for our map.\\
\textbf{III.} Since \textbf{Step I} is reversible, we can apply the reverse idea to the map $S_{a,b}$ obtained in \textbf{Step II} in order to ``squeeze it down'' to a lattice equation $\mathcal{Q}_{a,b}(f,f_{10},f_{01},f_{11},\phi,\phi_{10},\phi_{01},\phi_{11})=0$, such that $\lim_{(\phi,\phi_{10},\phi_{01},\phi_{11})\to 0} \mathcal{Q}_{a,b} = Q_{a,b}$. To do so, we use some symmetries of map $S_{a,b}$. The derived Grassmann extended quad-graph system, $ \mathcal{Q}_{a,b} =0$, is by definition integrable, since it has a Lax representation. Its Lax representation can be derived from the matrix refactorisation problem associated with the Grassmann extended Yang-Baxter map $ \mathcal{S}_{a,b} $, by relabeling the variables (matrix entries).

\section{Boussinesq system and a lift to a Boussinesq type Yang-Baxter map}\label{Grassmann-BSQ_system}
In this section, starting from a Boussinesq lattice system, we construct its associated Yang-Baxter lift.

\subsection{Boussinesq lattice equation}
The lattice Boussinesq system \eqref{BSQ-quad}, in the notation introduced in section \ref{notation}, reads
\begin{align}
(p_{01}-p_{10}) (p-r_{11}+q q_{11})&=(a-b)q,\nonumber\\
(q_{01}-q_{10}) (p-r_{11}+q q_{11})&=b-a,\label{BSQ-system}\\
(r_{01}-r_{10}) (p-r_{11}+q q_{11})&=(b-a)q_{11},\nonumber
\end{align}
where $a,b\in\mathbb{C}$, and it possesses the following strong Lax representation 
\begin{equation}\label{LaxEq-BSQ}
L_a(p_{01},q_{01},q_{11},r_{11})L_b(p,q,q_{01},r_{01})=L_b(p_{10},q_{10},q_{11},r_{11})L_a(p,q,q_{10},r_{10}),
\end{equation}
where $L_a$ is given by the following $3\times 3$ matrix \cite{Tongas-Nijhoff}
\begin{equation}\label{Lax-BSQ}
L_a(p,q,q_{10},r_{10}):=
\left(
\begin{matrix}
 -q_{10} & 1 & 0 \\
 -r_{10} & 0 & 1 \\
a-p q_{10}-q r_{10}-\lambda & p & q
\end{matrix}\right).
\end{equation}

\subsection{Step \textit{I}: Lift to a Yang-Baxter map}
Our aim is to derive a Yang-Baxter map starting from \eqref{BSQ-system}. The idea is to move from the fields $(p,q,r)$ (functions of two discrete variables $n,m\in\mathbb{N}$) to elements of an algebraic variety $V$. The right change of variables is indicated by the Lax representation \eqref{Lax-BSQ} itself.  

In particular, comparing \eqref{LaxEq-BSQ} to the following matrix refactorisation problem
\begin{equation}\label{LaxEq-BSQ-YB}
L_a(u_1,u_2,u_3,u_4)L_b(v_1,v_2,v_3,v_4)=L_b(y_1,y_2,y_3,y_4)L_a(x_1,x_2,x_3,x_4),
\end{equation}
we set $x_1=p$, $x_2=q$, $x_3=\mathcal{S}q$ and $x_4=\mathcal{S}r$, namely we consider the following $3\times 3$ matrix 
\begin{equation}\label{Lax-BSQ-YB}
L_a(\pmb{x}):=
\left(
\begin{matrix}
 -x_3 & 1 & 0 \\
 -x_4 & 0 & 1 \\
a-x_1x_3-x_2x_4-\lambda & x_1 & x_2
\end{matrix}\right), \quad \pmb{x}:=(x_1,x_2,x_3,x_4).
\end{equation}

Here, we understand $x_i$, $i=1,\ldots,4$, as elements of an algebraic variety $V$, and we substitute \eqref{Lax-BSQ-YB} to \eqref{LaxEq-BSQ-YB}. Then, \eqref{LaxEq-BSQ-YB} implies a correspondence given by

\begin{minipage}{.5\linewidth}
\begin{align*}
  u_1&=y_1+\frac{a-b}{x_1-y_4+x_2y_3}x_2,\\
	u_2&=y_2+\frac{b-a}{x_1-y_4+x_2y_3},\\
	u_3&=y_3,\\
	u_4&=y_4+v_1-x_1,
\end{align*}
\end{minipage}
\hspace{-.9 cm}
\begin{minipage}{.5\linewidth}
\begin{align}
  v_2&=x_2,\nonumber\\
	v_3&=x_3+\frac{b-a}{x_1-y_4+x_2y_3},\label{correspondence-BSQ}\\
	v_4&=x_4+\frac{b-a}{x_1-y_4+x_2y_3}y_3.\nonumber
\end{align}
\end{minipage}\\
This correspondence is a solution of \eqref{LaxEq-BSQ-YB} for any $v_1$. For a particular value of $v_1$, the above correspondence defines the following eight-dimensional map; in fact, we have the following.

\begin{proposition}
The map 
\begin{equation}\label{YB-BSQ}
(\textbf{x},\textbf{y})\stackrel{Y_{a,b}}{\rightarrow}(\textbf{u},\textbf{v}), 
\end{equation}
given by
 
\begin{minipage}{.5\linewidth}
\begin{align*}
 x_1\mapsto u_1&=y_1+\frac{a-b}{x_1-y_4+x_2y_3}x_2,\\
 x_2\mapsto u_2&=y_2+\frac{b-a}{x_1-y_4+x_2y_3},\\
 x_3\mapsto u_3&=y_3,\\
 x_4\mapsto u_4&=y_4,
\end{align*}
\end{minipage}
\hspace{-.9 cm}
\begin{minipage}{.5\linewidth}
\begin{align}\label{YB-Boussinesq}
y_1\mapsto v_1&=x_1, \nonumber\\
y_2\mapsto v_2&=x_2,\nonumber\\
y_3\mapsto v_3&=x_3+\frac{b-a}{x_1-y_4+x_2y_3},\\
y_4\mapsto v_4&=x_4+\frac{b-a}{x_1-y_4+x_2y_3}y_3.\nonumber
\end{align}
\end{minipage}
\\
is an eight-dimensional parametric Yang-Baxter map with invariants
\begin{subequations}
\begin{align}
I_1&=x_2+y_2-x_3-y_3,\\
I_2&=x_1+y_1+x_2y_2,\\
I_3&=x_4+y_4-x_3y_3,\\
I_4&=b(x_2-x_3)-a(y_3-y_2)+(x_4-x_3y_2-y_1)(x_1+x_2y_3-y_4).
\end{align}
\end{subequations}
\end{proposition}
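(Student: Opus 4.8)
The statement has two independent parts: that $Y_{a,b}$ solves the parametric Yang-Baxter equation \eqref{YB_eq1}, and that $I_1,\dots,I_4$ are invariants. The plan is to handle the Yang-Baxter property through the Lax structure rather than by brute-force composition of the map. First I would record that \eqref{YB-Boussinesq} is exactly the correspondence \eqref{correspondence-BSQ} specialised to $v_1=x_1$; in particular this choice forces $u_4=y_4+v_1-x_1=y_4$, which is how the otherwise free component becomes single-valued. Writing $D:=x_1-y_4+x_2y_3$ for the common denominator, I would confirm by direct $3\times3$ matrix multiplication that this specialised map satisfies the refactorisation problem \eqref{LaxEq-BSQ-YB} with $L_a$ as in \eqref{Lax-BSQ-YB}. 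Since the correspondence is already known to solve \eqref{LaxEq-BSQ-YB} for every $v_1$, this amounts to checking that the particular value $v_1=x_1$ is admissible, i.e. that it reproduces the stated $\mathbf{u},\mathbf{v}$.

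The core step is to upgrade the refactorisation to the full Yang-Baxter equation using the strong-Lax (trifactorisation) criterion recalled in the preliminaries. I would set up the trifactorisation problem $L_a(\mathbf{u})L_b(\mathbf{v})L_c(\mathbf{w})=L_a(\mathbf{x})L_b(\mathbf{y})L_c(\mathbf{z})$ with a common spectral parameter $\lambda$ and prove that it forces $\mathbf{u}=\mathbf{x}$, $\mathbf{v}=\mathbf{y}$, $\mathbf{w}=\mathbf{z}$. The leverage comes from the structure of \eqref{Lax-BSQ-YB}: the two unit entries and the two structural zeros, together with the single $\lambda$-linear entry, mean that comparing the two sides entrywise and then separating the coefficients of $\lambda^0$ and $\lambda^1$ should yield enough independent polynomial relations to pin down all twelve unknowns. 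Granting this uniqueness, the standard Veselov argument applies: both compositions in \eqref{YB_eq1}, evaluated on $(\mathbf{x},\mathbf{y},\mathbf{z})$, produce a refactorisation of one and the same ordered product $L_a(\mathbf{x})L_b(\mathbf{y})L_c(\mathbf{z})$ back into the form $L_aL_bL_c$, so uniqueness of the trifactorisation forces the two outputs to coincide. I expect this uniqueness lemma to be the main obstacle: the algebra is genuinely $3\times3$ and rational in the entries, and one must check that the $\lambda$-grading truly decouples the system rather than leaving a residual gauge freedom.

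For the invariants it suffices to substitute \eqref{YB-Boussinesq} and verify $I_k(\mathbf{u},\mathbf{v})=I_k(\mathbf{x},\mathbf{y})$. Three of them fall out immediately through telescoping in the denominator $D$: for $I_2$ the two contributions $\pm(a-b)x_2/D$ coming from $u_1$ and from $u_2v_2$ cancel; for $I_1$ the terms $(b-a)/D$ in $u_2$ and in $v_3$ cancel; and for $I_3$ the terms $(b-a)y_3/D$ in $v_4$ and in $u_3v_3$ cancel. The only laborious case is $I_4$, where I would first observe that its final factor $x_1+x_2y_3-y_4$ is precisely $D$; substituting the map, clearing $D$, and collecting terms should produce the required cancellation, but this is the one invariant whose check carries real algebraic weight. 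Since the proposition asserts only invariance and not functional independence, no genericity argument is needed, though one may note in passing that $I_1,\dots,I_4$ are manifestly functionally independent on $V\times V$.
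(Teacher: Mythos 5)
Your proposal is correct, but it inverts the paper's division of labour on both halves of the statement, so it is a genuinely different route. The paper proves the Yang--Baxter property by straightforward substitution of \eqref{YB-Boussinesq} into the Yang--Baxter equation \eqref{YB_eq1}, and obtains the invariants conceptually rather than computationally: it computes $\tr\left(L_b(\mathbf{y})L_a(\mathbf{x})\right)=1+I_2-I_3$ with $L_a$ as in \eqref{Lax-BSQ-YB}, which yields $I_2$ and $I_3$ at once (the refactorisation $L_a(\mathbf{u})L_b(\mathbf{v})=L_b(\mathbf{y})L_a(\mathbf{x})$ together with cyclicity of the trace makes invariance automatic), and it reads $I_1$ and $I_4$ off the characteristic equation $\det\left(L_b(\mathbf{y})L_a(\mathbf{x})-k\,\mathbb{I}_3\right)$. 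You do the opposite: the Yang--Baxter property via the trifactorisation criterion, and the invariants by direct substitution. Your telescoping checks of $I_1,I_2,I_3$ are correct, and the observation that the last factor of $I_4$ equals the denominator $D=x_1-y_4+x_2y_3$ does make the remaining check tractable; still, the spectral route delivers all four invariants with no substitution and explains their origin, which is what your approach gives up.

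Two caveats on your Yang--Baxter argument. First, the uniqueness of the trifactorisation, which you rightly identify as the crux, is only sketched in your plan; it is, however, genuinely available: the paper's Appendix~\ref{Appendix-A} carries out precisely this uniqueness argument for the Grassmann-extended Lax matrix \eqref{Lax-BSQ-YB-G}, and setting all odd variables to zero there produces the bosonic statement you need, so your route closes (at the cost of a computation comparable to the paper's direct substitution). Second, a bookkeeping slip: with the convention $L_a(\mathbf{u})L_b(\mathbf{v})=L_b(\mathbf{y})L_a(\mathbf{x})$ of \eqref{eqLax}, the quantity preserved by both sides of \eqref{YB_eq1} is the ordered product $L_c(\mathbf{z})L_b(\mathbf{y})L_a(\mathbf{x})$, not $L_a(\mathbf{x})L_b(\mathbf{y})L_c(\mathbf{z})$; each application of a $Y^{ij}$ reorders adjacent factors, and both compositions terminate in the form $L_a(\cdot)L_b(\cdot)L_c(\cdot)$ evaluated at the output triples, whereupon uniqueness forces the two outputs to coincide. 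This does not affect the validity of your argument, only its statement.
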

\begin{proof}
For the particular choice $v_1=x_1$, the correspondence \eqref{correspondence-BSQ} implies the eight-dimensional map \eqref{YB-BSQ}-\eqref{YB-Boussinesq}. The Yang-Baxter property of the latter can be shown by straightforward substitution of \eqref{YB-Boussinesq} into the Yang-Baxter equation.

Regarding the invariants, the trace of the monodromy matrix $\tr(L_b(\textbf{y})L_a(\textbf{x}))=1+I_2-I_3$, where $L_a$ is given in \eqref{Lax-BSQ-YB}. Thus, $I_2$ and $I_3$ are invariants. The rest, $I_1$ and $I_4$, are found from the characteristic equation $\det(L_b(\textbf{y})L_a(\textbf{x})-k\cdot  \mathbb{I}_3)$, where $\mathbb{I}_3$ is the $3\times 3$ identity matrix.
\end{proof}

\begin{remark}\normalfont
The above procedure is reversible. That is, starting from map \eqref{YB-BSQ}-\eqref{YB-Boussinesq}, we can retrieve the Boussinesq lattice system \eqref{YB-BSQ}-\eqref{YB-Boussinesq}. This follows from the observation that $x_3=y_2$, in \eqref{YB-Boussinesq}, implies $u_2=v_3$, in combination with a certain change of variables.
\end{remark}

\subsection{Step \textit{II}: Grasmann extended Yang-Baxter map of Boussinesq type}
In this section, we construct a Grassmann extension of Boussinesq type Yang-Baxter map \eqref{YB-BSQ}-\eqref{YB-Boussinesq}. In order to do that, we consider the Lax matrix \eqref{Lax-BSQ} augmented with two additional fermionic fields $\chi_1, \chi_2$, such that the conditions described in step II of the scheme are satisfied. 

In particular, we consider following $4\times 4$ matrix
\begin{equation}\label{Lax-BSQ-YB-G}
\mathcal{L}_a(\pmb{x},\pmb{\chi}):=
\left(
\begin{matrix}
 -x_3 & 1 & 0 & 0 \\
 -x_4 & 0 & 1 & 0 \\
a-x_1x_3-x_2x_4-\chi_1\chi_2-\lambda & x_1 & x_2 & \chi_1\\
-\chi_2 & 0 & 0 & 1
\end{matrix}\right), \quad (\pmb{x},\pmb{\chi}):=(x_1,x_2,x_3,x_4,\chi_1,\chi_2),
\end{equation}
which is matrix $L_a$ in \eqref{Lax-BSQ-YB} augmented with two additional fields $\chi_i\in G_1$, $i=1,2$. The above generalisation respects the following conditions 
\begin{enumerate}
	\item Bosonic limit
	\begin{equation}\label{cond2}
	\lim_{\pmb{\chi}\to 0} \mathcal{L}_a(\pmb{x},\pmb{\chi}) =L_a(\pmb{x});
	\end{equation}
	\item Determinant
	\begin{equation}\label{cond1}
	\sdet(\mathcal{L}_a)\footnote{By $\sdet(.)$ we denote the ``superdeterminant'' in the Grassmann case \cite{Berezin}. The superdeterminant is defined on matrices of the block form $M=\left(
\begin{matrix}
 P & \Pi \\
 \Lambda & L
\end{matrix}\right)$,
where $P$ and $L$ are square matrices of even entries, whereas $\Pi$ and $\Lambda$ are matrices with odd entries, not necessarily square.}=\det(L_a)=a-\lambda.
	\end{equation}
\end{enumerate}

\begin{proposition}
The matrix refactorisation problem
\begin{equation}\label{eqLax-BSQ-YB-G}
\mathcal{L}_a(\pmb{u},\pmb{\xi})\mathcal{L}_b(\pmb{v},\pmb{\eta})=\mathcal{L}_b(\pmb{y},\pmb{\psi})\mathcal{L}_a(\pmb{x},\pmb{\chi}),
\end{equation}
where $\mathcal{L}_a=\mathcal{L}_a(\textbf{x},\pmb{\chi})$ is given by \eqref{Lax-BSQ-YB-G}, is equivalent to the following correspondence:\\
\begin{minipage}{.5\linewidth}
\begin{subequations}\label{cor-G}
\begin{align}
 u_1&=y_1+\frac{a-b}{x_1-y_4+x_2y_3+\chi_1\psi_2}x_2,\label{cor-G-a}\\
 u_2&=y_2+\frac{b-a}{x_1-y_4+x_2y_3+\chi_1\psi_2},\label{cor-G-b}\\
 u_3&=y_3,\label{cor-G-c}\\
 \xi_1&=\psi_1+\frac{a-b}{x_1-y_4+x_2y_3}\chi_1,\label{cor-G-d}\\
 \xi_2&=\psi_2, \label{cor-G-e}
\end{align}
\end{subequations}
\end{minipage}
\begin{minipage}{.5\linewidth}
\begin{subequations}\label{cor2-G}
\begin{align}
 v_2&=x_2,\label{cor2-G-a}\\
 v_3&=x_3+\frac{b-a}{x_1-y_4+x_2y_3+\chi_1\psi_2},\label{cor2-G-b}\\
 v_4&=x_4+\frac{b-a}{x_1-y_4+x_2y_3+\chi_1\psi_2}y_3,\label{cor2-G-c}\\
 \eta_1&=\chi_1,\label{cor2-G-d}\\
 \eta_2&=\chi_2+\frac{a-b}{x_1-y_4+x_2y_3}\psi_2,\label{cor2-G-e}
\end{align}
\end{subequations}
\end{minipage}
\\
and
\begin{equation}\label{eq-u1-v4}
u_4=y_4+v_1-x_1.
\end{equation}
\end{proposition}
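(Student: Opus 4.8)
The plan is to establish the equivalence by direct computation, treating the matrix refactorisation problem \eqref{eqLax-BSQ-YB-G} as a system of $16$ scalar equations obtained by equating the entries of the two $4\times 4$ products. First I would expand the right-hand side $\mathcal{L}_b(\pmb{y},\pmb{\psi})\mathcal{L}_a(\pmb{x},\pmb{\chi})$, which is entirely expressed in the known input data $(\pmb{x},\pmb{\chi},\pmb{y},\pmb{\psi})$, and then the left-hand side $\mathcal{L}_a(\pmb{u},\pmb{\xi})\mathcal{L}_b(\pmb{v},\pmb{\eta})$ in terms of the sought outputs. Throughout, I would keep careful track of the Grassmann signs: each product $\chi_i\psi_j$ is even, but reorderings of odd factors introduce minus signs, and every square of an odd element vanishes. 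Showing the equivalence then amounts to solving this system for the outputs (the nontrivial direction); the converse is immediate substitution.

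Equating entries, I expect many equations to be trivial or to yield output variables at once. Reading off the appropriate positions I would obtain the algebraic relations $u_3=y_3$ and $v_2=x_2$, together with the fermionic identities $\xi_2=\psi_2$ and $\eta_1=\chi_1$ from the last row and last column, whose form is inherited from the block structure of \eqref{Lax-BSQ-YB-G}. The remaining entries couple the bosonic unknowns $u_1,u_2,v_3,v_4$ and the fermionic unknowns $\xi_1,\eta_2$ through a single recurring quantity, the denominator $D:=x_1-y_4+x_2y_3+\chi_1\psi_2$; inverting the resulting linear relations to produce the formulae \eqref{cor-G-a}--\eqref{cor-G-b} and \eqref{cor2-G-b}--\eqref{cor2-G-c} is where the content of the proposition lies.

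The key subtlety — and the step I expect to be the main obstacle — is the asymmetry between the bosonic-field formulae, whose denominators carry the fermionic correction $+\chi_1\psi_2$, and the fermionic-field formulae \eqref{cor-G-d} and \eqref{cor2-G-e}, whose denominators do not. This must be resolved by nilpotency: since $(\chi_1\psi_2)^2=0$ one has $D^{-1}=(x_1-y_4+x_2y_3)^{-1}-\chi_1\psi_2\,(x_1-y_4+x_2y_3)^{-2}$, and multiplying by the odd factor $\chi_1$ (respectively $\psi_2$) annihilates the correction term because $\chi_1^2=0$ (respectively $\psi_2^2=0$). Hence the fermionic components $\xi_1$ and $\eta_2$ may — and indeed must — be written with the reduced denominator $x_1-y_4+x_2y_3$, whereas the bosonic components genuinely retain the full $D$. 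Verifying this consistently across all the coupled entries is the delicate part of the calculation.

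Finally, I would treat the relation \eqref{eq-u1-v4} for $u_4$ as a compatibility condition rather than a determination: exactly as in the commutative correspondence \eqref{correspondence-BSQ}, the refactorisation constrains $u_4$ only through the combination $u_4-v_1=y_4-x_1$, leaving one variable free. As a consistency check, I would confirm that setting all odd fields to zero collapses the system back to \eqref{correspondence-BSQ}, in accordance with the bosonic-limit condition \eqref{cond2}, and that the superdeterminant constraint \eqref{cond1} holds on both sides, which guarantees that equating $\sdet$ of the two products imposes no further relation.
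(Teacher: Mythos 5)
Your proposal is correct and follows essentially the same route as the paper: expand the refactorisation problem entrywise, read off the trivial identifications $u_3=y_3$, $v_2=x_2$, $\xi_2=\psi_2$, $\eta_1=\chi_1$, solve the remaining coupled system with the common denominator $x_1-y_4+x_2y_3+\chi_1\psi_2$, and leave $u_4-v_1=y_4-x_1$ as the one-parameter freedom. In particular, your nilpotency expansion $D^{-1}=(x_1-y_4+x_2y_3)^{-1}-\chi_1\psi_2(x_1-y_4+x_2y_3)^{-2}$ together with $\chi_1^2=\psi_2^2=0$ is exactly the paper's ``multiply numerator and denominator by the conjugate expression'' step that reduces the fermionic denominators in \eqref{cor-G-d} and \eqref{cor2-G-e}.
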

\begin{proof}
Equation \eqref{eqLax-BSQ-YB-G} implies 
\begin{equation*}
u_3=y_3, \quad v_2=x_2, \quad \xi_2=\psi_2, \quad \chi_1=\chi_1,
\end{equation*}
equation \eqref{eq-u1-v4} for $u_4$ and $v_1$, as well as the following system of equations
\begin{subequations}\label{cor-sys-uv}
\begin{align}
&v_3-u_2=x_3-y_2,\label{cor-sys-uv-a}\\
&v_3y_3-v_4=y_2x_3-x_4,\label{cor-sys-uv-b}\\
&u_1+u_2x_2=y_1+y_2x_2,\label{cor-sys-uv-c}\\
&\xi_1+u_2\eta_1=y_2\chi_1+\psi_1,\label{cor-sys-uv-d}\\
&\eta_2-\psi_2 v_3=\chi_2-\psi_2x_3,\label{cor-sys-uv-e}\\
&v_3(u_4-v_1)+b-x_2v_4-\eta_1\eta_2=x_3(y_4-x_1)+a-x_2x_4-\chi_1\chi_2,\label{cor-sys-uv-f}\\
&u_2(v_1-u_4)+a-u_1y_3-\xi_1\psi_2=y_2(x_1-y_4)+b-y_1y_3-\psi_1\psi_2,\label{cor-sys-uv-g}\\
&u_2(b-v_1v_3-x_2v_4-\eta_1\eta_2)-v_3(a-u_1y_3-u_2u_4-\xi_1\psi_2)-u_1v_4-\xi_1\eta_2=\label{cor-sys-uv-h}\\
&y_2(a-x_1x_3-x_2x_4-\chi_1\chi_2)-x_3(b-y_1y_3-y_2y_4-\psi_1\psi_2)-y_1x_4-\psi_1\chi_2,\nonumber
\end{align}
\end{subequations}
for the rest of the variables $u_1$, $u_2$, $u_4$, $\xi_1$, $v_3$ and $\xi_1$.

From \eqref{cor-sys-uv-d} we obtain $\xi_1\psi_2=\psi_1\psi_2+(y_2-u_2)\chi_1\psi_2$. Substituting the latter to \eqref{cor-sys-uv-g} and using \eqref{eq-u1-v4}, we obtain $u_2$ as given in \eqref{cor-G-b}. With use of $u_2$, \eqref{cor-sys-uv-a} and \eqref{cor-sys-uv-c} imply $v_3$ and $u_1$ as given in \eqref{cor2-G-b} and \eqref{cor-G-a}, respectively. Subsequently, with use of \eqref{cor2-G-b}, from equation \eqref{cor-sys-uv-b} follows that $v_4$ is given by \eqref{cor2-G-c}, whereas equations \eqref{cor-sys-uv-d} and \eqref{cor-sys-uv-b} imply the following expressions
\begin{equation*}
 \xi_1=\psi_1+\frac{a-b}{x_1-y_4+x_2y_3+\chi_1\psi_2}\chi_1, \qquad \eta_2=\chi_2+\frac{a-b}{x_1-y_4+x_2y_3+\chi_1\psi_2}\psi_2
\end{equation*}
for $\xi_1$ and $\eta_2$. Multiplying both the numerator and the denominator of the fractions in the above equations by the conjugate expression of the denominator, it follows that $\xi_1$ and $\eta_2$ are given by \eqref{cor-G-d} and \eqref{cor2-G-e}, respectively.
\end{proof}

\begin{theorem}\label{YBmap}
Map 
\begin{equation}\label{map-YB-BSQ-G}
S_{a,b}:((\textbf{x},\pmb{\chi}),(\textbf{y},\pmb{\psi}))\rightarrow ((\textbf{u},\pmb{\xi}),(\textbf{v},\pmb{\eta})),
\end{equation}
given by
\begin{subequations}\label{YB-BSQ-G}
\allowdisplaybreaks
\begin{align}
x_1\mapsto u_1&=y_1+\frac{a-b}{x_1-y_4+x_2y_3+\chi_1\psi_2}x_2,\label{YB-BSQ-G-a}\\
x_2\mapsto u_2&=y_2+\frac{b-a}{x_1-y_4+x_2y_3+\chi_1\psi_2},\label{YB-BSQ-G-b}\\
x_3\mapsto u_3&=y_3,\label{YB-BSQ-G-c}\\
x_4\mapsto u_4&=y_4,\label{YB-BSQ-G-d}\\
\chi_1\mapsto \xi_1&=\psi_1+\frac{a-b}{x_1-y_4+x_2y_3}\chi_1,\label{YB-BSQ-G-e}\\
\chi_2\mapsto \xi_2&=\psi_2,\label{YB-BSQ-G-f}\\
y_1\mapsto v_1&=x_1,\label{YB-BSQ-G-g}\\
y_2\mapsto v_2&=x_2,\label{YB-BSQ-G-h}\\
y_3\mapsto v_3&=x_3+\frac{b-a}{x_1-y_4+x_2y_3+\chi_1\psi_2},\label{YB-BSQ-G-i}\\
y_4\mapsto v_4&=x_4+\frac{b-a}{x_1-y_4+x_2y_3+\chi_1\psi_2}y_3,\label{YB-BSQ-G-j}\\
\psi_1\mapsto \eta_1&=\chi_1,\label{YB-BSQ-G-k}\\
\psi_2\mapsto \eta_2&=\chi_2+\frac{a-b}{x_1-y_4+x_2y_3}\psi_2,\label{YB-BSQ-G-l}
\end{align}
\end{subequations}
is a twelve-dimensional parametric Yang-Baxter map, and possesses the following invariants
\begin{subequations}
\begin{align}
I_1&=x_2+y_2-x_3-y_3,\\
I_2&=x_1+y_1+x_2y_2,\\
I_3&=x_4+y_4-x_3y_3,\\
I_4&=b(x_2-x_3)-a(y_3-y_2)+(x_4-x_3y_2-y_1)(x_1+x_2y_3-y_4),
\intertext{as well as the following anti-invariants}
I_5&=\chi_1\psi_1,\qquad\text{and}\quad I_6=\chi_2\psi_2.
\end{align}
\end{subequations}

Moreover, the bosonic limit of \eqref{map-YB-BSQ-G}-\eqref{YB-BSQ-G} is map \eqref{YB-BSQ}-\eqref{YB-Boussinesq}.
\end{theorem}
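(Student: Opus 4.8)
The plan is to establish the four assertions of the theorem in turn, exploiting throughout the observation from the preceding Proposition that the map \eqref{map-YB-BSQ-G}--\eqref{YB-BSQ-G} is exactly the single-valued specialisation $v_1=x_1$ (whence $u_4=y_4$ via \eqref{eq-u1-v4}) of the correspondence produced by the Lax refactorisation \eqref{eqLax-BSQ-YB-G}. Thus $\mathcal{L}_a$ in \eqref{Lax-BSQ-YB-G} is a Lax matrix for $S_{a,b}$, well-definedness of the map is inherited from that choice exactly as in the bosonic Proposition, and this Lax structure is the main lever for everything that follows.

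For the Yang-Baxter property I would use the matrix trifactorisation criterion recalled in the Preliminaries rather than brute-force substitution. Concretely, I would show that the trifactorisation problem
\[
\mathcal{L}_a(\pmb u,\pmb\xi)\mathcal{L}_b(\pmb v,\pmb\eta)\mathcal{L}_c(\pmb w,\pmb\zeta)=\mathcal{L}_a(\pmb x,\pmb\chi)\mathcal{L}_b(\pmb y,\pmb\psi)\mathcal{L}_c(\pmb z,\pmb\omega)
\]
admits only the trivial solution, i.e. forces $(\pmb u,\pmb\xi)=(\pmb x,\pmb\chi)$, $(\pmb v,\pmb\eta)=(\pmb y,\pmb\psi)$ and $(\pmb w,\pmb\zeta)=(\pmb z,\pmb\omega)$. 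Since each $\mathcal{L}$ is linear in $\lambda$, the triple product is a cubic matrix polynomial in $\lambda$; matching the coefficients of the powers of $\lambda$ entrywise yields an overdetermined system in the field variables that pins the arguments down uniquely. By the criterion of the Preliminaries, this separation property implies that the refactorisation map $S_{a,b}$ satisfies \eqref{YB_eq1}; alternatively, one may substitute \eqref{YB-BSQ-G} directly into both compositions of \eqref{YB_eq1}, as in the bosonic Proposition. Either way, the main obstacle is the careful bookkeeping of signs arising from the anticommuting fields and the systematic use of $\chi_i^2=\psi_i^2=0$, which is where the Grassmann computation genuinely departs from the commutative one.

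For the invariants $I_1,\dots,I_4$ I would reproduce the monodromy argument of the bosonic Proposition with the \emph{supertrace} in place of the trace. Setting $M:=\mathcal{L}_b(\pmb y,\pmb\psi)\mathcal{L}_a(\pmb x,\pmb\chi)$, the refactorisation \eqref{eqLax-BSQ-YB-G} gives $M=\mathcal{L}_a(\pmb u,\pmb\xi)\mathcal{L}_b(\pmb v,\pmb\eta)$, and because $\str$ is cyclic on even matrices -- whereas over the noncommutative ring $G$ the ordinary trace is not -- each supertrace $\str(M^{j})$, a polynomial in $\lambda$, is preserved by $S_{a,b}$. Reading off the coefficients of these $\lambda$-polynomials, exactly as $\tr(L_b L_a)=1+I_2-I_3$ and the characteristic equation produced $I_1,\dots,I_4$ in the bosonic case, yields the four invariants; one then checks that the fermionic corrections $\chi_1\psi_2$ occurring in the denominators of \eqref{YB-BSQ-G} cancel in these combinations, so that $I_1,\dots,I_4$ retain their purely bosonic commutative form (this cancellation is immediate for $I_1,I_2,I_3$ and is the only genuinely laborious check for $I_4$).

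The anti-invariants $I_5=\chi_1\psi_1$ and $I_6=\chi_2\psi_2$ I would verify by a short direct computation. Using \eqref{YB-BSQ-G-e}, \eqref{YB-BSQ-G-k} together with $\chi_1^2=0$,
\[
\xi_1\eta_1=\left(\psi_1+\frac{a-b}{x_1-y_4+x_2y_3}\,\chi_1\right)\chi_1=\psi_1\chi_1=-\chi_1\psi_1,
\]
and likewise \eqref{YB-BSQ-G-f}, \eqref{YB-BSQ-G-l} with $\psi_2^2=0$ give $\xi_2\eta_2=\psi_2\chi_2=-\chi_2\psi_2$; the sign flip is precisely what qualifies $I_5,I_6$ as \emph{anti}-invariants rather than invariants, and distinguishes them from the $\str(M^{j})$ that are genuine invariants. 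Finally, the bosonic limit is immediate: setting all anticommuting fields $\chi_1,\chi_2,\psi_1,\psi_2$ to zero (consistently with \eqref{cond2}) kills every fermionic term in \eqref{YB-BSQ-G}, so the outputs $\xi_i,\eta_i$ vanish and the remaining bosonic components reduce verbatim to \eqref{YB-BSQ}--\eqref{YB-Boussinesq}, as required.
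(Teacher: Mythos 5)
Your reduction of the Yang--Baxter property to a trifactorisation problem with \emph{all} arguments independent contains a step that fails: for this Lax matrix the uniqueness you assert is false. Already the like-ordered bifactorisation
\begin{equation*}
\mathcal{L}_a(\pmb u,\pmb\xi)\,\mathcal{L}_b(\pmb v,\pmb\eta)=\mathcal{L}_a(\pmb x,\pmb\chi)\,\mathcal{L}_b(\pmb y,\pmb\psi)
\end{equation*}
admits, besides the trivial solution, the one-parameter family
\begin{equation*}
u_4~\text{arbitrary},\qquad v_1=u_4+y_1-x_4,\qquad (u_1,u_2,u_3,\pmb\xi)=(x_1,x_2,x_3,\pmb\chi),\qquad (v_2,v_3,v_4,\pmb\eta)=(y_2,y_3,y_4,\pmb\psi),
\end{equation*}
as one checks entrywise: the dependence on $u_4$ and $v_1$ in the product only ever enters through the combination $v_1-u_4$. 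This is precisely the echo of the free parameter $v_1$ in the correspondence \eqref{correspondence-BSQ}, \eqref{eq-u1-v4}. Taking the third factors equal, this family propagates to your trifactorisation, so matching the coefficients of $\lambda$ entrywise can \emph{not} ``pin the arguments down uniquely'' and the criterion, applied in the unconstrained form you propose, does not close. The paper's Appendix A is structured exactly to avoid this: it first composes the maps on both sides of \eqref{YB_eq1} and, because of how the variables are distributed in the Lax representation \eqref{Lax-YB-G}, arrives at the constrained trifactorisation \eqref{A-trifac}, in which the slots $z_3,z_4,\zeta_2$ of the $\mathcal{L}_a$-factor and $x_1,x_2,\chi_1$ of the $\mathcal{L}_c$-factor are shared by both sides; it is these shared slots that kill the spurious freedom above. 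Moreover, even for the correctly constrained problem, the uniqueness is the actual substance of the proof---the entrywise system \eqref{variety-G}, the successive eliminations, and the factorisation that forces $u_2=x_2$ occupy the bulk of the paper's argument---and your proposal leaves all of it as an assertion. (Your parenthetical fallback, direct substitution of \eqref{YB-BSQ-G} into \eqref{YB_eq1}, would work in principle, but it is likewise not carried out.)

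On the remaining claims: the paper verifies $I_1,\dots,I_4$ by direct substitution rather than by a monodromy argument, so your supertrace route is a genuine variant. It is sound as far as it goes---$\str$ is indeed cyclic on even supermatrices, and in fact $\str\bigl(\mathcal{L}_b(\pmb y,\pmb\psi)\mathcal{L}_a(\pmb x,\pmb\chi)\bigr)=I_2-I_3-1$, the fermionic contributions dropping out automatically because the odd entries occupy only the $(3,4)$ and $(4,1)$ slots---but as presented it delivers only the combination $I_2-I_3$; recovering $I_1$ and $I_4$ (and separating $I_2$ from $I_3$) would require higher supertraces or a super-characteristic function, which you neither compute nor show to be fermion-free, so this part is also incomplete as a proof. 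Your verification of the anti-invariants $I_5,I_6$ via $\chi_1^2=\psi_2^2=0$, and of the bosonic limit, are correct and coincide with the paper's computations.
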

\begin{proof}
See Appendix \ref{Appendix-A}.
\end{proof}

\begin{corollary}\label{cor-Lax}
Map \eqref{map-YB-BSQ-G}-\eqref{YB-BSQ-G} satisfies the following matrix refactorisation problem:
\begin{equation}\label{Lax-YB-G}
\mathcal{L}_a(u_1,u_2,y_3,y_4,\xi_1,\psi_2)\mathcal{L}_b(x_1,x_2,v_3,v_4,\chi_1,\eta_2)=\mathcal{L}_b(y_1,y_2,y_3,y_4,\psi_1,\psi_2)\mathcal{L}_a(x_1,x_2,x_3,x_4,\chi_1,\chi_2),
\end{equation}
where $\mathcal{L}_a(x_1,x_2,x_3,x_4,\chi_1,\chi_2)$ is given by \eqref{Lax-BSQ-YB-G}.
\end{corollary}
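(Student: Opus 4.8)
The plan is to recognise that the claimed refactorisation \eqref{Lax-YB-G} is nothing but the defining refactorisation \eqref{eqLax-BSQ-YB-G} after the components of the map $S_{a,b}$ have been inserted into the matrix arguments. The Proposition preceding Theorem \ref{YBmap} already established that \eqref{eqLax-BSQ-YB-G} is equivalent to the correspondence \eqref{cor-G}--\eqref{cor2-G} together with \eqref{eq-u1-v4}, and the map \eqref{YB-BSQ-G} is precisely the solution of that correspondence obtained by fixing $v_1=x_1$. Hence $S_{a,b}$ satisfies \eqref{eqLax-BSQ-YB-G} by construction, and the whole task reduces to rewriting \eqref{eqLax-BSQ-YB-G} in terms of the independent input variables.

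First I would read off from \eqref{YB-BSQ-G} the six trivial component relations that the map imposes, namely $u_3=y_3$, $u_4=y_4$, $\xi_2=\psi_2$, $v_1=x_1$, $v_2=x_2$ and $\eta_1=\chi_1$ (equations \eqref{YB-BSQ-G-c}, \eqref{YB-BSQ-G-d}, \eqref{YB-BSQ-G-f}, \eqref{YB-BSQ-G-g}, \eqref{YB-BSQ-G-h} and \eqref{YB-BSQ-G-k}, respectively). Since the matrix $\mathcal{L}_a$ of \eqref{Lax-BSQ-YB-G} depends on exactly the four bosonic slots $x_1,x_2,x_3,x_4$ and the two fermionic slots $\chi_1,\chi_2$, substituting these relations into the two factors on the left-hand side of \eqref{eqLax-BSQ-YB-G} gives
\[
\mathcal{L}_a(\pmb{u},\pmb{\xi})=\mathcal{L}_a(u_1,u_2,y_3,y_4,\xi_1,\psi_2),\qquad \mathcal{L}_b(\pmb{v},\pmb{\eta})=\mathcal{L}_b(x_1,x_2,v_3,v_4,\chi_1,\eta_2).
\]
As the right-hand side of \eqref{eqLax-BSQ-YB-G} is already written in the required form, the equality \eqref{eqLax-BSQ-YB-G} becomes verbatim \eqref{Lax-YB-G}, which finishes the argument.

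The proof is therefore essentially bookkeeping, and the only point that needs care is confirming that no residual dependence on the eliminated outputs $u_3,u_4,\xi_2,v_1,v_2,\eta_1$ survives once the substitution is made; this is immediate because $\mathcal{L}$ has precisely six argument slots and each of the six is matched by one of the relations above. I would nonetheless stress that the content of the corollary is conceptual rather than computational: it certifies that the Grassmann-extended map $S_{a,b}$, having been \emph{extracted} from \eqref{eqLax-BSQ-YB-G}, carries the same augmented Lax matrix along with it, so that relabelling the entries of \eqref{Lax-YB-G} will produce the strong Lax representation of the Grassmann-extended quad-graph system obtained in Step III of the scheme.
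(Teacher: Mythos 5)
Your proposal is correct and is precisely the argument the paper leaves implicit: the corollary is stated without proof because the map $S_{a,b}$ was extracted from \eqref{eqLax-BSQ-YB-G} (with the choice $v_1=x_1$ in the correspondence of the preceding Proposition), so substituting the trivial component relations $u_3=y_3$, $u_4=y_4$, $\xi_2=\psi_2$, $v_1=x_1$, $v_2=x_2$, $\eta_1=\chi_1$ into the six argument slots of each Lax factor turns \eqref{eqLax-BSQ-YB-G} verbatim into \eqref{Lax-YB-G}. Your bookkeeping check that every eliminated output is matched by exactly one slot, and your closing remark on the role of \eqref{Lax-YB-G} in Step III, accord with the paper's use of the corollary in deriving \eqref{Laxrepr}.
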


\subsection{Step \textit{III}: Squeeze down to Grassmann extension of the Boussinesq lattice equation}
Here, we shall construct our Grassmann extended system of lattice equations using the symmetries of the Yang-Baxter map which was constructed in the previous section. That said, using the observation that, in \eqref{YB-BSQ-G}, $y_2=x_3$ implies $u_2=v_3$, we shall construct a Grassmann extension of the Boussinesq lattice equation \eqref{BSQ-system}. In particular, we have the following.

\begin{theorem}(Grassmann extension of the Boussinesq lattice system)
Map \eqref{map-YB-BSQ-G}-\eqref{YB-BSQ-G} can be squeezed down to the following system
\begin{align}\label{BSQ-quad-G}
&(p_{01}-p_{10})(p+qq_{11}-r_{11}+\tau\theta_{11})=(a-b)q,\nonumber\\
&(q_{01}-q_{10})(p+qq_{11}-r_{11}+\tau\theta_{11})=b-a,\nonumber\\
&(r_{01}-r_{10})(p+qq_{11}-r_{11}+\tau\theta_{11})=(b-a)q_{11},\\
&(\tau_{01}-\tau_{10})(p+qq_{11}-r_{11}+\tau\theta_{11})=(a-b)\tau,\nonumber\\
&(\theta_{01}-\theta_{10})(p+qq_{11}-r_{11}+\tau\theta_{11})=(a-b)\theta_{11}.\nonumber
\end{align}
System \eqref{BSQ-quad-G} is integrable with Lax representation 
\begin{equation}\label{Laxrepr}
\mathcal{L}_a(p_{01},q_{01},q_{11},r_{11},\tau_{01},\theta_{11})\mathcal{L}_b(p,q,q_{01},r_{01},\tau,\theta_{01})=\mathcal{L}_b(p_{10},q_{10},q_{11},r_{11},\tau_{10},\theta_{11})\mathcal{L}_a(p,q,q_{10},r_{10},\tau,\theta_{10}),
\end{equation}
where
\begin{equation}\label{Lax-BSQ-G}
\mathcal{L}_a(p,q,q_{10},r_{10},\tau,\theta_{10}):=
\left(
\begin{matrix}
 -q_{10} & 1 & 0 & 0 \\
 -r_{10} & 0 & 1 & 0 \\
a-pq_{10}-qr_{10}-\tau\theta_{10}-\lambda & p & q & \tau\\
-\theta_{10} & 0 & 0 & 1
\end{matrix}\right),
\end{equation}
where $p,q,r\in G_0$ and $\tau,\theta\in G_1$.
Furthermore, system \eqref{BSQ-quad-G} possesses the following conservation law
\begin{equation}\label{consLaws-a}
(\mathcal{T}-1)(p_{10}+qq_{10}-r)=(\mathcal{S}-1)(p_{01}+qq_{01}-r),
\end{equation}
and satisfies the following
\begin{equation}\label{consLaws-b}
(\mathcal{T}+1)(\tau\theta_{10})=(\mathcal{S}+1)(\tau\theta_{01}).
\end{equation}
Finally, the bosonic limit of \eqref{BSQ-quad-G} is the Boussinesq lattice equation \eqref{BSQ-system}.
\end{theorem}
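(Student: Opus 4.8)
The plan is to carry out \textbf{Step III} of the scheme explicitly, i.e.\ to invert the lift of \textbf{Step I} on the Grassmann-extended map $S_{a,b}$. First I would read off the change of variables that turns the matrix refactorisation problem \eqref{Lax-YB-G} of Corollary~\ref{cor-Lax} into the lattice Lax equation \eqref{Laxrepr}. Matching the arguments of $\mathcal{L}_a$ and $\mathcal{L}_b$ slot by slot on both sides of the two equations forces the identification
\[
x_1=p,\ x_2=q,\ x_3=q_{10},\ x_4=r_{10},\ \chi_1=\tau,\ \chi_2=\theta_{10},
\]
\[
y_1=p_{10},\ y_2=q_{10},\ y_3=q_{11},\ y_4=r_{11},\ \psi_1=\tau_{10},\ \psi_2=\theta_{11},
\]
together with the induced values $u_1=p_{01}$, $u_2=q_{01}$, $\xi_1=\tau_{01}$, $v_3=q_{01}$, $v_4=r_{01}$, $\eta_2=\theta_{01}$; the remaining map components $u_3=y_3$, $u_4=y_4$, $v_1=x_1$, $v_2=x_2$, $\xi_2=\psi_2$, $\eta_1=\chi_1$ are then tautologies that make the relabelling consistent. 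The decisive feature is that this dictionary yields $x_3=y_2=q_{10}$, precisely the symmetry noted before the theorem, so that $u_2=v_3$ (both equal $q_{01}$) and the refactorisation becomes a genuine one-parameter family of lattice equations rather than a map.

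With the dictionary fixed, the five equations of \eqref{BSQ-quad-G} arise by direct substitution into \eqref{YB-BSQ-G}. Writing $D:=p+qq_{11}-r_{11}$ for the purely bosonic denominator, relations \eqref{YB-BSQ-G-a}, \eqref{YB-BSQ-G-b} and \eqref{YB-BSQ-G-j} give the first three (bosonic) equations, with the full denominator $D+\tau\theta_{11}$ entering through $\chi_1\psi_2=\tau\theta_{11}$. The two fermionic equations come from \eqref{YB-BSQ-G-e} and \eqref{YB-BSQ-G-l}, which after substitution read $\tau_{01}-\tau_{10}=\tfrac{a-b}{D}\tau$ and $\theta_{01}-\theta_{10}=\tfrac{a-b}{D}\theta_{11}$. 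Here I expect the only genuine subtlety: the map delivers the bosonic denominator $D$, whereas \eqref{BSQ-quad-G} is written uniformly with $D+\tau\theta_{11}$. The two forms agree because multiplying $\tfrac{a-b}{D}\tau$ by $D+\tau\theta_{11}$ leaves only $(a-b)\tau$, since $\tau^2=0$ kills the correction $\tau\cdot\tau\theta_{11}$; the same nilpotency and anticommutativity argument ($\theta_{11}^2=0$) handles the $\theta$-equation. This legitimises the uniform denominator, and the Lax representation \eqref{Laxrepr} is then immediate from Corollary~\ref{cor-Lax} under the same relabelling, which simultaneously establishes integrability.

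It remains to check the two conservation relations and the bosonic limit, all short consequences of the five equations. For \eqref{consLaws-a} I would expand both shift operators, cancel the common $p_{11}$ and $r$ contributions, and reduce the identity to $(p_{01}-p_{10})+q(q_{01}-q_{10})+(q_{01}-q_{10})q_{11}-(r_{01}-r_{10})=0$; inserting the right-hand sides $\tfrac{(a-b)q}{D+\tau\theta_{11}}$, $\tfrac{b-a}{D+\tau\theta_{11}}$, $\tfrac{(b-a)q_{11}}{D+\tau\theta_{11}}$ from the first three equations, the $q_{11}$-terms cancel and $(a-b)q+q(b-a)=0$, so the relation holds. For \eqref{consLaws-b} the same expansion reduces the claim to $(\tau_{01}-\tau_{10})\theta_{11}=\tau(\theta_{01}-\theta_{10})$; substituting the two fermionic equations makes both sides equal to $\tfrac{a-b}{D}\tau\theta_{11}$, using that $\tfrac{a-b}{D}\in G_0$ commutes past $\tau$. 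Finally, setting $\tau=\theta=0$ trivialises the last two equations and collapses the first three onto \eqref{BSQ-system}, giving the bosonic limit. The one place demanding genuine care is the fermionic bookkeeping of the second paragraph—reconciling the map's denominator with the uniform denominator of \eqref{BSQ-quad-G}—while everything else is routine substitution and cancellation.
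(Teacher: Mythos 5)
Your proposal is correct and takes essentially the same route as the paper: the same relabelling dictionary induced by the symmetry $y_2=x_3\Rightarrow u_2=v_3$ (namely $x_1=v_1=p$, $x_2=v_2=q$, $y_2=x_3=q_{10}$, $u_2=v_3=q_{01}$, $x_4=r_{10}$, $y_3=u_3=q_{11}$, $y_4=u_4=r_{11}$, $\chi_1=\eta_1=\tau$, $\chi_2=\theta_{10}$, $\psi_1=\tau_{10}$, $\psi_2=\xi_2=\theta_{11}$), the five lattice equations read off from \eqref{YB-BSQ-G-a}, \eqref{YB-BSQ-G-b}, \eqref{YB-BSQ-G-j}, \eqref{YB-BSQ-G-e}, \eqref{YB-BSQ-G-l}, the Lax representation \eqref{Laxrepr} obtained from Corollary \ref{cor-Lax} under the same relabelling, and the same substitution checks for \eqref{consLaws-a}, \eqref{consLaws-b} and the bosonic limit. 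Two of your refinements actually improve on the paper's write-up: the nilpotency argument ($\tau^2=\theta_{11}^2=0$) legitimising the uniform denominator $p+qq_{11}-r_{11}+\tau\theta_{11}$ in the fermionic equations makes explicit a step the paper passes over when rewriting \eqref{sys-stair} as \eqref{BSQ-quad-G}, and your reduction of \eqref{consLaws-b} to $(\tau_{01}-\tau_{10})\theta_{11}=\tau(\theta_{01}-\theta_{10})$ carries the correct sign, whereas the paper's displayed intermediate identity $(\tau_{01}-\tau_{10})\theta_{11}=-\tau(\theta_{01}-\theta_{10})$ contains a sign slip.
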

\begin{proof}
For map \eqref{map-YB-BSQ-G}-\eqref{YB-BSQ-G}, $y_2=x_3$ implies $u_2=v_3$. Now, relabeling $y_2=x_3=q_{10}$, $u_2=v_3=q_{01}$, $x_1=v_1=p$, $x_2=v_2=q$, $x_4=r_{10}$, $y_3=u_3=q_{11}$, $y_4=u_4=r_{11}$, $\chi_1=\eta_1=\tau$, $\chi_2=\theta_{10}$, $\psi_1=\tau_{10}$, $\psi_2=\xi_2=\theta_{11}$, equations \eqref{YB-BSQ-G-a}, \eqref{YB-BSQ-G-b}, \eqref{YB-BSQ-G-i}, \eqref{YB-BSQ-G-e} and \eqref{YB-BSQ-G-l} imply
\begin{subequations}\label{sys-stair}
\begin{align}
&p_{01}=p_{10}+\frac{a-b}{p-r_{11}+qq_{11}+\tau\theta_{11}}q,\label{sys-stair-a}\\
&q_{01}=q_{10}+\frac{b-a}{p-r_{11}+qq_{11}+\tau\theta_{11}},\label{sys-stair-b}\\
&r_{01}=r_{10}+\frac{b-a}{p-r_{11}+qq_{11}+\tau\theta_{11}}q_{11},\label{sys-stair-c}\\
&\theta_{01}=\theta_{10}+\frac{a-b}{p-r_{11}+qq_{11}}\theta_{11},\label{sys-stair-d}\\
&\tau_{01}=\tau_{10}+\frac{a-b}{p-r_{11}+qq_{11}}\tau,\label{sys-stair-e}
\end{align}
\end{subequations}
which can be rewritten in the form of system \eqref{BSQ-quad-G}. Its Lax representation follows from corollary \eqref{cor-Lax} after the above relabeling.

From equations \eqref{sys-stair-a} and \eqref{sys-stair-c} follows that
\begin{equation*}
p_{01}-p_{10}+q(q_{01}-q_{10})=r_{01}-r_{10}-q_{11}(q_{01}-q_{10})=0.
\end{equation*}
The latter equations imply
\begin{equation*}
p_{01}-p_{10}+q(q_{01}-q_{10})=r_{01}-r_{10}+q_{11}(q_{10}-q_{01}),
\end{equation*}
which is equivalent to \eqref{consLaws-a}.

Moreover, by straightforward calculation one can verify that \eqref{consLaws-b}, namely equation
\begin{equation*}
(\tau_{01}-\tau_{10})\theta_{11}=-\tau(\theta_{01}-\theta_{10}),
\end{equation*}
is identically satisfied in view of equations \eqref{sys-stair-d} and \eqref{sys-stair-e}.

Finally, setting all the odd variables and their shifts equal to zero, namely $\tau=\tau_{10}=\tau_{01}=\theta=\theta_{11}=0$, system \eqref{BSQ-quad-G} implies the Boussinesq lattice equation \eqref{BSQ-system}.
\end{proof}

\begin{remark}\normalfont
Equation \eqref{consLaws-b} can be written in the form of conservation law under the change of variables $\tau \rightarrow (-1)^n\tau$ and $\theta \rightarrow (-1)^{m-1}\theta $, i.e. $\theta_{01} \rightarrow (-1)^m\theta_{01}$.
\end{remark}

For system \eqref{sys-stair} we can set the initial value problem on the staircase, as in figure \ref{fig-ivp}.

\begin{figure}[ht]
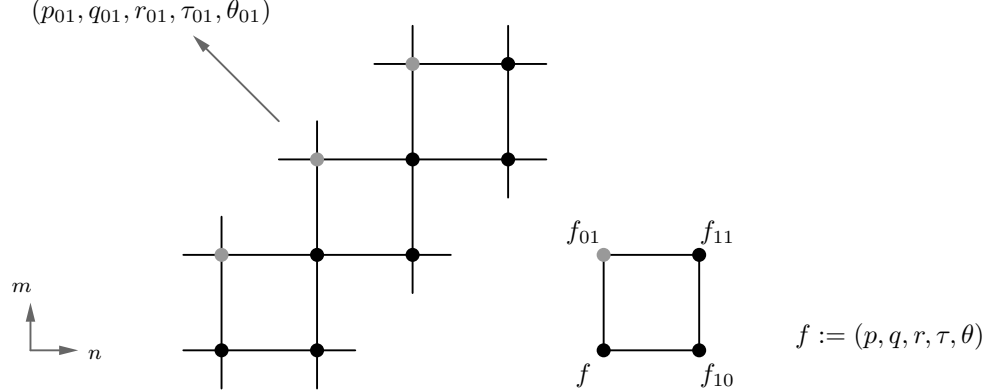

\centertexdraw{
\setunitscale 0.5

\move (0 0) \lvec (1 0) \lvec (1 1) \lvec (2 1) \lvec (2 2) \lvec (3 2) \lvec (3 3)
\move (0 0) \lvec (0 1) \lvec (1 1) \lvec (1 2) \lvec (2 2) \lvec (2 3) \lvec (3 3) 
\move (4 0) \lvec (5 0) \lvec (5 1) \lvec (4 1) \lvec (4 0)
\move (-.4 0) \lvec (0 0)
\move (0 -.4) \lvec (0 0)
\move (3 3) \lvec (3 3.4)
\move (3 3) \lvec (3.4 3)
\move (0 1) \lvec (-.4 1)
\move (0 1) \lvec (0 1.4)
\move (1 2) \lvec (0.6 2)
\move (1 2) \lvec (1 2.4)
\move (2 3) \lvec (1.6 3)
\move (2 3) \lvec (2 3.4)
\move (1 0) \lvec (1.4 0)
\move (1 0) \lvec (1 -.4)
\move (2 1) \lvec (2.4 1)
\move (2 1) \lvec (2 0.6)
\move (3 2) \lvec (3.4 2)
\move (3 2) \lvec (3 1.6)
\move (0 0) \fcir f:0.0 r:0.075
\move (1 0) \fcir f:0.0 r:0.075
\move (1 1) \fcir f:0.0 r:0.075
\move (2 1) \fcir f:0.0 r:0.075
\move (2 2) \fcir f:0.0 r:0.075
\move (3 2) \fcir f:0.0 r:0.075
\move (3 3) \fcir f:0.0 r:0.075
\move (0 1) \fcir f:0.6 r:0.075
\move (1 2) \fcir f:0.6 r:0.075
\move (2 3) \fcir f:0.6 r:0.075
\move (4 0) \fcir f:0.0 r:0.075
\move (5 0) \fcir f:0.0 r:0.075
\move (5 1) \fcir f:0.0 r:0.075
\move (4 1) \fcir f:0.6 r:0.075
\arrowheadsize l:.20 w:.10
\move(-2 0) \linewd 0.02 \setgray 0.4 \arrowheadtype t:F \avec(-1.5 0) 
\move(-2 0) \linewd 0.02 \setgray 0.4 \arrowheadtype t:F \avec(-2 0.5) 
\htext (-1.4 -.1) {\scriptsize{$n$}}
\htext (-2.2 .6) {\scriptsize{$m$}}

\htext (3.7 -.4) {\small{$f$}}
\htext (3.6 1.1) {\small{$f_{01}$}}
\htext (5 -.4) {\small{$f_{10}$}}
\htext (5 1.1) {\small{$f_{11}$}}

\htext (6 0) {\small{$f:=(p,q,r,\tau,\theta)$}}

\htext (-2 3.4) {{\small{$(p_{01},q_{01},r_{01},\tau_{01},\theta_{01})$}}}

\move(.6 2.4) \linewd 0.02 \setgray 0.4 \arrowheadtype t:F \avec(-.3 3.3) 
}
\caption{{\em{Initial value problem on the vertices of the staircase and direction of evolution.}}}
\label{fig-ivp}
\end{figure}

\section{3D consistency of a Grassmann extended Boussinesq-type system}\label{3D-consistency}
Now, conservation law \eqref{consLaws-a} indicates to seek a function $f=f(n,m)$ such that
\begin{subequations}\label{conl-p-q}
\begin{align}
p_{10}+qq_{10}-r&=(\mathcal{S}-1)f,\label{conl-p-q-a}\\
p_{01}+qq_{01}-r&=(\mathcal{T}-1)f,\label{conl-p-q-b}
\end{align}
\end{subequations}
namely, seek $f=f(n,m)$ satisfying the following system difference equations
\begin{align}
f_{10}-f&=p_{10}+qq_{10}-r,\\
f_{01}-f&=p_{01}+qq_{01}-r.
\end{align}
The above imply that
\begin{equation}
f_{01}-f_{10}=p_{01}-p_{10}+q(q_{01}-q_{10})=0.
\end{equation}
The above equation implies $f=C(n+m)$. We restrict ourselves to the case where $C(n+m)=\text{const.}=0$, and from system \eqref{conl-p-q} follows that
\begin{subequations}\label{aux-sys}
\begin{align}
p_{10}+qq_{10}-r&=0,\label{aux-sys-a}\\
p_{01}+qq_{01}-r&=0,\label{aux-sys-b}
\end{align}
\end{subequations}
which are equivalent to \eqref{sys-stair-a} and \eqref{sys-stair-c}.

With the use of \eqref{aux-sys}, we prove the following.

\begin{proposition}
System \eqref{sys-stair} can be written in the form of the following Grassmann extended Boussinesq-type system
\begin{subequations}\label{sys-stair-equiv}
\begin{align}
p_{11}&=\frac{r_{10}q_{01}-r_{01}q_{10}}{q_{01}-q_{10}},\label{sys-stair-equiv-a}\\
q_{11}&=\frac{r_{01}-r_{10}}{q_{01}-q_{10}},\label{sys-stair-equiv-b}\\
r_{11}&=\frac{b-a+q(r_{01}-r_{10})+\tau(\theta_{01}-\theta_{10})+p(q_{01}-q_{10})}{q_{01}-q_{10}},\label{sys-stair-equiv-c}\\
\theta_{11}&=\frac{\theta_{01}-\theta_{10}}{q_{01}-q_{10}},\label{sys-stair-equiv-d}\\
\tau &=\frac{\tau_{01}-\tau_{10}}{q_{01}-q_{10}}\label{sys-stair-equiv-e}
\end{align}
\end{subequations}
where $p,q,r\in G_0$ and $\tau,\theta\in G_1$.
\end{proposition}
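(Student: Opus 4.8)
The plan is to treat system \eqref{sys-stair-equiv} as the result of \emph{solving} the staircase system \eqref{sys-stair} for the corner fields $p_{11},q_{11},r_{11},\theta_{11}$ and for $\tau$, supplementing the five relations of \eqref{sys-stair} with the auxiliary relations \eqref{aux-sys} that encode the restriction $f\equiv 0$ of the conservation law. The organising observation is that the common denominator $D:=p-r_{11}+qq_{11}+\tau\theta_{11}$ of the even equations \eqref{sys-stair-a}--\eqref{sys-stair-c} differs from the denominator $\tilde D:=p-r_{11}+qq_{11}$ of the odd equations \eqref{sys-stair-d}--\eqref{sys-stair-e} only by the even \emph{nilpotent} term $\tau\theta_{11}$, for which $(\tau\theta_{11})^2=0$. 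Both $D$ and $\tilde D$ are even, hence central, and invertible whenever their bosonic parts are nonzero, which I assume throughout.

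First I would read off from \eqref{sys-stair-b} the denominator identity $q_{01}-q_{10}=(b-a)D^{-1}$, equivalently $D=(b-a)(q_{01}-q_{10})^{-1}$. Dividing \eqref{sys-stair-c} by this relation gives $r_{01}-r_{10}=(q_{01}-q_{10})q_{11}$, which is \eqref{sys-stair-equiv-b}. For the odd fields I would divide \eqref{sys-stair-d} and \eqref{sys-stair-e} by the same identity; here one must pass between $D^{-1}$ and $\tilde D^{-1}$ via the truncated expansion $D^{-1}=\tilde D^{-1}-\tau\theta_{11}\tilde D^{-2}$. This is harmless because every surviving term already carries an odd factor ($\theta_{11}$ or $\tau$), so the products $\tau\theta_{11}\theta_{11}$, $\tau\theta_{11}\tau$ and $(\tau\theta_{11})^2$ all vanish; the two denominators thus collapse and one recovers \eqref{sys-stair-equiv-d} and \eqref{sys-stair-equiv-e}.

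With $q_{11}$, $\theta_{11}$ and $\tau$ in hand, I would obtain $r_{11}$ algebraically: rewriting the denominator identity as $p-r_{11}+qq_{11}+\tau\theta_{11}=(b-a)(q_{01}-q_{10})^{-1}$, solving for $r_{11}$, and then substituting $qq_{11}(q_{01}-q_{10})=q(r_{01}-r_{10})$ from \eqref{sys-stair-equiv-b} together with the relation $\tau\theta_{11}(q_{01}-q_{10})=\pm\,\tau(\theta_{01}-\theta_{10})$ obtained in the odd computation; this yields \eqref{sys-stair-equiv-c}. Finally, for $p_{11}$, which does not appear in \eqref{sys-stair} at all, I would invoke the auxiliary system: applying $\mathcal{T}$ to \eqref{aux-sys-a} gives $p_{11}=r_{01}-q_{01}q_{11}$, while applying $\mathcal{S}$ to \eqref{aux-sys-b} gives $p_{11}=r_{10}-q_{10}q_{11}$. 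Eliminating $q_{11}$ through \eqref{sys-stair-equiv-b} turns either expression into \eqref{sys-stair-equiv-a}; that the two shifts produce the \emph{same} $p_{11}$ is exactly the consistency guaranteed by $f_{01}=f_{10}$, i.e. by the restriction that produced \eqref{aux-sys}.

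The step I expect to be the main obstacle is the odd part: controlling the Grassmann signs and the interchange of $D^{-1}$ and $\tilde D^{-1}$ when dividing \eqref{sys-stair-d}--\eqref{sys-stair-e} and when reinserting $\tau\theta_{11}$ into the numerator of $r_{11}$. Everything rests on the nilpotency relations $\tau^2=\theta_{11}^2=(\tau\theta_{11})^2=0$, so I would carry these explicitly and verify that every ambiguous product annihilates before simplifying. The remaining manipulations are routine, since $p,q,r$ and the even combinations $q_{01}-q_{10}$ and $r_{01}-r_{10}$ commute with everything.
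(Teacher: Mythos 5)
Your route is essentially the paper's own: the paper likewise obtains \eqref{sys-stair-equiv-b} and \eqref{sys-stair-equiv-a} from the $\mathcal{T}$- and $\mathcal{S}$-shifts of \eqref{aux-sys} (subtracting the two shifted relations \eqref{p-q-shift} and back-substituting), and then rewrites the remaining staircase equations using \eqref{sys-stair-equiv-b} and the denominator identity from \eqref{sys-stair-b}. Your variant of deriving \eqref{sys-stair-equiv-b} by dividing \eqref{sys-stair-c} by \eqref{sys-stair-b} is equivalent, since \eqref{aux-sys} is equivalent to \eqref{sys-stair-a} and \eqref{sys-stair-c}; your truncated expansion $D^{-1}=\tilde{D}^{-1}-\tau\theta_{11}\tilde{D}^{-2}$ together with $\tau^2=\theta_{11}^2=(\tau\theta_{11})^2=0$ is exactly the implicit content of the paper's terse odd-sector steps, and your even-sector manipulations are correct.

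The genuine gap is the ``$\pm$'' you leave unresolved: that sign is the \emph{only} non-routine content of the odd sector, and the assertion that division ``recovers \eqref{sys-stair-equiv-d} and \eqref{sys-stair-equiv-e}'' does not survive the computation. As printed, \eqref{sys-stair-d} and \eqref{sys-stair-e} both carry the prefactor $a-b$ while \eqref{sys-stair-b} carries $b-a$, so your division yields $\theta_{11}=(\theta_{10}-\theta_{01})/(q_{01}-q_{10})$ and $\tau=(\tau_{10}-\tau_{01})/(q_{01}-q_{10})$ --- the opposite of \emph{both} displayed signs --- and your $r_{11}$ step then produces the numerator $a-b+q(r_{01}-r_{10})-\tau(\theta_{01}-\theta_{10})+p(q_{01}-q_{10})$ rather than \eqref{sys-stair-equiv-c}. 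In fact the two displayed signs cannot simultaneously be obtained from any single reading of \eqref{sys-stair}: the identity \eqref{consLaws-b}, $(\tau_{01}-\tau_{10})\theta_{11}=-\tau(\theta_{01}-\theta_{10})$, forces the $\theta$- and $\tau$-relations to carry \emph{opposite} relative signs with respect to $q_{01}-q_{10}$. The resolution lies upstream: multiplying out \eqref{cor-sys-uv-e} with $\xi_2=\psi_2$ gives $\eta_2=\chi_2+\psi_2(v_3-x_3)$, i.e.\ $\theta_{01}-\theta_{10}=(q_{01}-q_{10})\theta_{11}$, so the numerator in \eqref{YB-BSQ-G-l}/\eqref{sys-stair-d} should read $b-a$; with that correction \eqref{sys-stair-equiv-d} and the $+\tau(\theta_{01}-\theta_{10})$ term of \eqref{sys-stair-equiv-c} come out as displayed, while \eqref{cor-sys-uv-d} (which gives $\xi_1=\psi_1+(y_2-u_2)\chi_1$, i.e.\ $\tau_{01}-\tau_{10}=(q_{10}-q_{01})\tau$) shows that \eqref{sys-stair-equiv-e} must acquire numerator $\tau_{10}-\tau_{01}$, and the constant in \eqref{sys-stair-equiv-c} comes out as $a-b$. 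None of this bookkeeping is visible in your write-up because the decisive computation is deferred to a ``$\pm$''; you need to carry it out explicitly, as it is precisely where the statement's sign conventions must be confronted.
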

\begin{proof}
Shifting equations \eqref{aux-sys-a} and \eqref{aux-sys-b} in the $m$ and $n$ direction, respectively, we obtain
\begin{subequations}\label{p-q-shift}
\begin{align}
p_{11}+q_{01}q_{11}-r_{01}&=0,\label{p-q-shift-a}\\
p_{11}+q_{10}q_{11}-r_{10}&=0.\label{p-q-shift-b}
\end{align}
\end{subequations}
Subtraction of the above and solving for $q_{11}$ implies \eqref{sys-stair-equiv-b}. Now, using the latter, we obtain $p_{11}$ given by \eqref{sys-stair-equiv-a}. Finally, with the use of \eqref{sys-stair-equiv-b}, equation \eqref{sys-stair-e} can be rewritten in the form \eqref{sys-stair-equiv-e}, and using this expression for $\theta_{11}$, \eqref{sys-stair-equiv-a} implies \eqref{sys-stair-equiv-c}.
\end{proof}

\begin{remark}\normalfont
The bosonic limit of system \eqref{sys-stair-equiv} is the Boussinesq lattice system as it appears in \cite{Bridgman}.
\end{remark}

Although system \eqref{sys-stair-equiv} is integrable in the sense that it possesses Lax representation, we cannot claim integrability in the sense of $3D$-consistency, since the term ``$\tau_{11}$'' is missing. Since one of the main purposes of this paper is to prove that the $3D$-consistency is preserved in the Grassmann extension of some systems, we shall prove this property for the bosonic limit of system of system \eqref{sys-stair-equiv}, as $\tau \rightarrow 0$. In particular, we have the following.

\begin{theorem}\label{3D-BSQ}
The system
\begin{subequations}\label{sys-stair-equiv-t0}
\begin{align}
p_{11}&=\frac{r_{10}q_{01}-r_{01}q_{10}}{q_{01}-q_{10}},\label{sys-stair-equiv-t0-a}\\
q_{11}&=\frac{r_{01}-r_{10}}{q_{01}-q_{10}},\label{sys-stair-equiv-t0-b}\\
r_{11}&=\frac{b-a+q(r_{01}-r_{10})+p(q_{01}-q_{10})}{q_{01}-q_{10}},\label{sys-stair-equiv-t0-c}\\
\theta_{11}&=\frac{\theta_{01}-\theta_{10}}{q_{01}-q_{10}},\label{sys-stair-equiv-t0-d}
\end{align}
\end{subequations}
where $p,q,r\in G_0$ and $\theta\in G_1$, has the $3D$-consistency property. 
\end{theorem}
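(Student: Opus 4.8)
The plan is to verify cube-consistency directly, but to split the system into its classical (bosonic) part and its genuinely new (fermionic) part. First I would introduce a third lattice direction $k$ with shift $\mathcal{Z}$ and parameter $c$, and write \eqref{sys-stair-equiv-t0} on the three coordinate faces of the elementary cube, assigning the parameter pairs $(a,b)$, $(a,c)$, $(b,c)$ to the faces in the $(n,m)$-, $(n,k)$- and $(m,k)$-planes, respectively. This assignment is forced, since the only parameter-dependent equation \eqref{sys-stair-equiv-t0-c} is invariant under the simultaneous swap $\mathcal{S}\leftrightarrow\mathcal{T}$, $a\leftrightarrow b$. The subsystem \eqref{sys-stair-equiv-t0-a}--\eqref{sys-stair-equiv-t0-c} for $(p,q,r)$ is exactly the commutative lattice Boussinesq system, whose $3D$-consistency is classical \cite{Tongas-Nijhoff, Bridgman, Bobenko-Suris}; I would invoke this to fix $p_{111},q_{111},r_{111}$ and, crucially, the intermediate values $q_{110},q_{101},q_{011}$ unambiguously. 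It then remains only to check that the new equation \eqref{sys-stair-equiv-t0-d} for the odd field $\theta$ is consistent on the cube.

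The key observation I would use is that the three ways of computing $\theta_{111}$ from the opposite faces are nothing but the three pairwise slopes of the planar points
\[
P_1=(q_{110},\theta_{110}),\qquad P_2=(q_{101},\theta_{101}),\qquad P_3=(q_{011},\theta_{011}).
\]
Indeed, the top, back and right faces give $\theta_{111}=\frac{\theta_{011}-\theta_{101}}{q_{011}-q_{101}}$, $\theta_{111}=\frac{\theta_{011}-\theta_{110}}{q_{011}-q_{110}}$ and $\theta_{111}=\frac{\theta_{101}-\theta_{110}}{q_{101}-q_{110}}$, which are precisely the slopes $P_2P_3$, $P_1P_3$ and $P_1P_2$. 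Hence the $3D$-consistency of \eqref{sys-stair-equiv-t0-d} is equivalent to the collinearity of $P_1,P_2,P_3$. Since $q,r$ are even and the relevant differences of the nodes $q_{i00}$ are invertible even elements, the odd field $\theta$ enters only linearly and no $\theta^2=0$ subtleties arise, so the usual divided-difference calculus applies verbatim.

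To prove collinearity I would introduce divided differences over the nodes $q_1:=q_{100}$, $q_2:=q_{010}$, $q_3:=q_{001}$, namely $[g_i,g_j]:=\frac{g_j-g_i}{q_j-q_i}$ and $[g_i,g_j,g_k]:=\frac{[g_j,g_k]-[g_i,g_j]}{q_k-q_i}$. Then \eqref{sys-stair-equiv-t0-b} and \eqref{sys-stair-equiv-t0-d}, written on the coordinate faces, state precisely that $q_{110}=[r_1,r_2]$, $q_{101}=[r_1,r_3]$, $q_{011}=[r_2,r_3]$ and $\theta_{110}=[\theta_1,\theta_2]$, $\theta_{101}=[\theta_1,\theta_3]$, $\theta_{011}=[\theta_2,\theta_3]$, with $r_i:=r_{i00}$, $\theta_i:=\theta_{i00}$. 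Using the symmetry of divided differences together with the standard identity $[g_i,g_k]-[g_i,g_j]=(q_k-q_j)[g_i,g_j,g_k]$, each of the three slopes collapses, after cancellation of the common first-order factor between numerator and denominator, to the single expression $\frac{[\theta_1,\theta_2,\theta_3]}{[r_1,r_2,r_3]}$. The three slopes therefore coincide, the points $P_1,P_2,P_3$ are collinear, and $\theta_{111}=\frac{[\theta_1,\theta_2,\theta_3]}{[r_1,r_2,r_3]}$ is manifestly symmetric in the three lattice directions. Together with the bosonic consistency, this establishes the $3D$-consistency of \eqref{sys-stair-equiv-t0}.

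The only genuinely new computation is this collinearity identity for the fermionic equation, and I expect it to be short; the main risk is bookkeeping, i.e. getting the parameter assignment and the edge orientations on the cube right so that the three determinations of $\theta_{111}$ really are the three pairwise slopes of $P_1,P_2,P_3$. The Grassmann nature is harmless here, precisely because all denominators are invertible even elements and $\theta$ occurs linearly. If a fully self-contained argument were preferred, re-deriving the classical Boussinesq consistency for $(p,q,r)$ rather than citing it would constitute the bulk of the remaining labour.
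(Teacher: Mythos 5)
Your proposal is correct in substance and lands on the same identities as the paper's Appendix \ref{Appendix-B}, but it organises the work differently. The paper proceeds by brute force: it writes the system on the bottom, front and left faces, computes all three candidate quadruples $(p_{111},q_{111},r_{111},\theta_{111})$ in closed form (\eqref{111-from-btm}, \eqref{111-from-front}, \eqref{111-from-left}) and observes that they coincide because the trilinear function $\mathcal{A}$ is invariant under simultaneous cyclic permutations (Lemma \ref{inv-A}; cyclic invariance suffices there because the three evaluation routes are cyclic rotations of one another). Your divided-difference argument is exactly this lemma in a cleaner guise: with nodes $q_i=q_{i00}$ and your second divided difference $[\cdot,\cdot,\cdot]$ one has
\begin{equation*}
\mathcal{A}\bigl(\mathfrak{a}_{100},\mathfrak{a}_{010},\mathfrak{a}_{001},q_{100},q_{010},q_{001}\bigr)=(q_{010}-q_{100})(q_{001}-q_{100})(q_{001}-q_{010})\,[\mathfrak{a}_{1},\mathfrak{a}_{2},\mathfrak{a}_{3}],
\end{equation*}
so your $\theta_{111}=[\theta_1,\theta_2,\theta_3]/[r_1,r_2,r_3]$ is literally \eqref{111-from-btm-d} after the Vandermonde factors cancel, and the collinearity of your points $P_1,P_2,P_3$ is precisely the permutation invariance of $\mathcal{A}$ (your version even yields full $S_3$-symmetry rather than only cyclic). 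Your face formulas for the three determinations of $\theta_{111}$ as pairwise slopes, the parameter assignment, and the observation that the Grassmann grading is harmless because $\theta$ enters linearly and all denominators are generically invertible even elements, all match the paper.

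The one place where your plan is thinner than the paper hides a genuine subtlety: you cannot simply ``invoke classical consistency to fix $p_{111},q_{111},r_{111}$ unambiguously'' for \emph{arbitrary} initial data of the vertex form \eqref{sys-stair-equiv-t0-a}--\eqref{sys-stair-equiv-t0-c}. The three determinations of $p_{111}$ and $q_{111}$ do agree identically, but the three determinations of $r_{111}$ from \eqref{sys-stair-equiv-t0-c} involve the pairs $(p_{001},q_{001})$, $(p_{010},q_{010})$ and $(p_{100},q_{100})$ respectively, and they coincide only when the initial data respect the edge constraints inherited from \eqref{aux-sys}, i.e. $p_{100}+qq_{100}=p_{010}+qq_{010}=p_{001}+qq_{001}\,(=r)$. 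This is exactly why the paper's proof explicitly invokes the relations $p_{001}=p_{100}+q(q_{100}-q_{001})$ and $p_{010}=p_{100}+q(q_{100}-q_{010})$, derived from \eqref{aux-sys} written on the faces of the cube, in obtaining \eqref{111-from-btm-c} and \eqref{111-from-front-c}. The classical CAC statements you would cite concern the lattice Boussinesq in its multi-component form in which such edge relations are part of the system (equivalently, consistency on the constrained initial-value set), so the citation is usable, but your write-up must state the constraint explicitly; as phrased, it elides the only delicate step of the bosonic check. With that amendment your argument is complete and constitutes a legitimate, somewhat more structural alternative to the paper's explicit computation.
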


The proof of this theorem is presented in Appendix B. It is worth mentioning that, since the ``111'' values depend on the initial $p$ and $q$ (see Appendix B), the system \eqref{3D-BSQ} does not have the ``tetrahedron property''\footnote{The name of the property is due to the fact that the values `100', `010', `001' and `111' form a tetrahedron (see figure \ref{quad-cube}).}. For the proof the following Lemma is needed.

\begin{lemma}\label{inv-A}
The following function
\begin{equation}
\mathcal{A}(\mathfrak{a}_{100},\mathfrak{a}_{010},\mathfrak{a}_{001},\mathfrak{b}_{100},\mathfrak{b}_{010},\mathfrak{b}_{001})=(\mathfrak{a}_{001}-\mathfrak{a}_{010})(\mathfrak{b}_{001}-\mathfrak{b}_{100})-(\mathfrak{a}_{001}-\mathfrak{a}_{100})(\mathfrak{b}_{001}-\mathfrak{b}_{010}),
\end{equation}
where $\mathfrak{a}$ and $\mathfrak{b}$ can be either odd or even variables, is invariant under simultaneous cyclic permutations of $(\mathfrak{a}_{100},\mathfrak{a}_{010},\mathfrak{a}_{001})$ and $(\mathfrak{b}_{100},\mathfrak{b}_{010},\mathfrak{b}_{001})$.
\end{lemma}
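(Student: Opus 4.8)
The plan is to verify the claimed invariance directly, exploiting the antisymmetric structure of the function $\mathcal{A}$. A cyclic permutation sends $(\mathfrak{a}_{100},\mathfrak{a}_{010},\mathfrak{a}_{001})\mapsto(\mathfrak{a}_{010},\mathfrak{a}_{001},\mathfrak{a}_{100})$ and simultaneously $(\mathfrak{b}_{100},\mathfrak{b}_{010},\mathfrak{b}_{001})\mapsto(\mathfrak{b}_{010},\mathfrak{b}_{001},\mathfrak{b}_{100})$. First I would observe that $\mathcal{A}$ can be rewritten in a more symmetric form. Expanding the two products,
\begin{equation*}
\mathcal{A}=(\mathfrak{a}_{001}-\mathfrak{a}_{010})(\mathfrak{b}_{001}-\mathfrak{b}_{100})-(\mathfrak{a}_{001}-\mathfrak{a}_{100})(\mathfrak{b}_{001}-\mathfrak{b}_{010}),
\end{equation*}
one sees that the quadratic terms $\mathfrak{a}_{001}\mathfrak{b}_{001}$ cancel. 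Collecting the remaining terms, I expect $\mathcal{A}$ to reorganise into a manifestly cyclic expression such as
\begin{equation*}
\mathcal{A}=\mathfrak{a}_{100}\mathfrak{b}_{010}+\mathfrak{a}_{010}\mathfrak{b}_{001}+\mathfrak{a}_{001}\mathfrak{b}_{100}-\mathfrak{a}_{100}\mathfrak{b}_{001}-\mathfrak{a}_{010}\mathfrak{b}_{100}-\mathfrak{a}_{001}\mathfrak{b}_{010},
\end{equation*}
which is a $3\times 3$ ``cross-product--type'' determinant in the indices $100,010,001$. Once $\mathcal{A}$ is in this form, the invariance under the simultaneous cyclic shift of the three indices is immediate, since the displayed sum is visibly unchanged when every pair of indices is rotated $100\to 010\to 001\to 100$.

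The key steps, in order, are therefore: (i) expand both products and cancel the common $\mathfrak{a}_{001}\mathfrak{b}_{001}$ term; (ii) collect the mixed monomials $\mathfrak{a}_i\mathfrak{b}_j$ and show the result is the cyclic (antisymmetrised) form above; (iii) apply the cyclic substitution to this form and check term-by-term that it returns the same expression. The one point demanding care is the parity bookkeeping: the Lemma allows $\mathfrak{a}$ and $\mathfrak{b}$ to be odd (fermionic) variables. In that case the order of the factors matters, since $\mathfrak{a}_i\mathfrak{b}_j=-\mathfrak{b}_j\mathfrak{a}_i$ and $\mathfrak{a}_i\mathfrak{a}_j=0$. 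I would thus keep the ordering $\mathfrak{a}$-before-$\mathfrak{b}$ fixed throughout the expansion, so that no sign ambiguity enters, and verify that the cancellation of the $\mathfrak{a}_{001}\mathfrak{b}_{001}$ term and the collection of the mixed monomials go through without needing to commute any $\mathfrak{a}$ past a $\mathfrak{b}$. Because the cyclic permutation is applied simultaneously to both families and preserves the $\mathfrak{a}$-before-$\mathfrak{b}$ ordering of every monomial, no extra signs are generated, and the verification is uniform in the parities.

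The main obstacle I anticipate is not conceptual but a matter of being scrupulous with the noncommutative (Grassmann) signs, ensuring that the rewriting in step (ii) never silently swaps an $\mathfrak{a}$ and a $\mathfrak{b}$; beyond that, the statement reduces to the elementary and parity-independent fact that the antisymmetrised bilinear form in three indices is invariant under their cyclic rotation. Hence the proof is short: establish the cyclic normal form and read off invariance.
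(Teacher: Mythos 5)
Your argument is correct and is essentially the paper's own proof: the paper likewise recasts $\mathcal{A}$ in a manifestly cyclic form, $\mathcal{A}=\mathfrak{b}_{100}(\mathfrak{a}_{010}-\mathfrak{a}_{001})+\mathfrak{b}_{010}(\mathfrak{a}_{001}-\mathfrak{a}_{100})+\mathfrak{b}_{001}(\mathfrak{a}_{100}-\mathfrak{a}_{010})$, and reads off the invariance under the simultaneous cyclic shift by direct check. Note only that your displayed normal form has the wrong overall sign --- the expansion with $\mathfrak{a}$-before-$\mathfrak{b}$ ordering actually gives $\mathcal{A}=\mathfrak{a}_{100}\mathfrak{b}_{001}+\mathfrak{a}_{010}\mathfrak{b}_{100}+\mathfrak{a}_{001}\mathfrak{b}_{010}-\mathfrak{a}_{100}\mathfrak{b}_{010}-\mathfrak{a}_{010}\mathfrak{b}_{001}-\mathfrak{a}_{001}\mathfrak{b}_{100}$, i.e.\ the negative of what you wrote --- but this is harmless, since a global sign does not affect cyclic invariance, and your parity bookkeeping (never commuting an $\mathfrak{a}$ past a $\mathfrak{b}$) is in fact more scrupulous than the paper's rewriting, which tacitly places the $\mathfrak{b}$'s on the left and is unambiguous only because in all of the paper's applications the $\mathfrak{b}$-slots are filled by the even variables $q_{100},q_{010},q_{001}$.
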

\begin{proof}
It is 
$$\mathcal{A}(\mathfrak{a}_{100},\mathfrak{a}_{010},\mathfrak{a}_{001},\mathfrak{b}_{100},\mathfrak{b}_{010},\mathfrak{b}_{001})=\mathfrak{b}_{100}(\mathfrak{a}_{010}-\mathfrak{a}_{001})+\mathfrak{b}_{010}(\mathfrak{a}_{001}-\mathfrak{a}_{100})+\mathfrak{b}_{001}(\mathfrak{a}_{100}-\mathfrak{a}_{010}).$$ It can be verified by straightforward calculation that 
$$\mathcal{A}(\mathfrak{a}_{100},\mathfrak{a}_{010},\mathfrak{a}_{001},\mathfrak{b}_{100},\mathfrak{b}_{010},\mathfrak{b}_{001})=\mathcal{A}(\mathfrak{a}_{001},\mathfrak{a}_{100},\mathfrak{a}_{010},\mathfrak{b}_{001},\mathfrak{b}_{100},\mathfrak{b}_{010})=\mathcal{A}(\mathfrak{a}_{010},\mathfrak{a}_{001},\mathfrak{a}_{100},\mathfrak{b}_{010},\mathfrak{b}_{001},\mathfrak{b}_{100}).$$
\end{proof}

The Lax representation of system \eqref{3D-BSQ} is given by \eqref{Laxrepr}-\eqref{Lax-BSQ-G} for $\tau\rightarrow 0$, namely by
\begin{equation}\label{Laxrepr-T0}
\mathcal{L}_a(p_{01},q_{01},q_{11},r_{11},\theta_{11})\mathcal{L}_b(p,q,q_{01},r_{01},\theta_{01})=\mathcal{L}_b(p_{10},q_{10},q_{11},r_{11},\theta_{11})\mathcal{L}_a(p,q,q_{10},r_{10},\theta_{10}),
\end{equation}
where
\begin{equation}\label{Lax-BSQ-G-T0}
\mathcal{L}_a(p,q,q_{10},r_{10},\tau,\theta_{10}):=
\left(
\begin{matrix}
 -q_{10} & 1 & 0 & 0 \\
 -r_{10} & 0 & 1 & 0 \\
a-pq_{10}-qr_{10}-\tau\theta_{10}-\lambda & p & q & 0\\
-\theta_{10} & 0 & 0 & 1
\end{matrix}\right).
\end{equation}

\section{Concluding remarks}
In this paper, on one hand, we construct some novel, integrable, noncommutative (Grassmann) Boussinesq type systems, namely systems \eqref{BSQ-quad-G} and \eqref{sys-stair-equiv}, together with a $3D$-consistent limit \eqref{sys-stair-equiv}, namely system \eqref{sys-stair-equiv-t0}. Moreover, we derive a novel Yang-Baxter map \eqref{YB-BSQ} together with its Grassmann extension \eqref{map-YB-BSQ-G}. On the other hand, this paper answers an important question regarding the $3D$ consistency of systems, when they are extended to the Grassmann case. That is, not all systems lose their property to be $3D$ consistent in their noncommutative extension, which can be demonstrated by system \eqref{sys-stair-equiv-t0}. 

The $3D$-consistency property of systems like \eqref{sys-stair-equiv-t0} is a very important, since for systems with such property we can:
\begin{itemize}
\item algorithmically construct its Lax representation \cite{Bobenko-Suris, Bridgman, Hiet-Frank-Joshi, Frank2002};
\item obtain a B\"acklund tranformation \cite{Atkinson2006, Hiet-Frank-Joshi}.
\end{itemize}
In our case the Lax representation is already known.

Regarding the $3D$-consistency of the system \eqref{sys-stair-equiv}, we would like to stress out the following. The demand that quad-graph equations \eqref{eq-quad-graph} need to be linear in every variable is because we need to be able to solve uniquely for any of the fields $f$, $f_{10}$, $f_{01}$ and $f_{11}$. This is essential for the $3D$-consistency property. Nevertheless, this is not quite the case for systems with anticommutative variables: In our system, \eqref{sys-stair-equiv}, all equations are linear in all variables, which is obvious if one rewrites the equations in polynomial form. However, this does not imply unique solvability. For instance, equation \eqref{sys-stair-equiv-c} cannot be solved for neither $\tau$, nor $\theta_{10}$, nor $\theta_{01}$.

Our results can be extended in several ways. We list a couple of problems for future work.
\begin{enumerate}
\item The complete (Liouville) integrability of maps \eqref{map-YB-BSQ-G}-\eqref{YB-BSQ-G} and \eqref{YB-BSQ}-\eqref{YB-Boussinesq} is an open problem. We conjecture that there is a suitable Poisson bracket with respect to which the maps' invariants are in involution.
\item Study the solutions of system \eqref{sys-stair-equiv-t0}. In particular, knowing that system \eqref{sys-stair-equiv-t0} has the $3D$-consistency property, we can derive a B\"acklund transformation by setting $(p_{001},q_{001},r_{001},\theta_{001})\equiv (u,v,w,\phi)$, and rewrite \eqref{sys-front}, \eqref{sys-left} as a B\"acklund tranformation between $(p,q,r,\theta)$ and $(u,v,w,\phi)$,\footnote{If $(p,q,r,\theta)$ satisfy \eqref{sys-stair-equiv-t0}, then so do $(u,v,w,\phi)=(p_{001},q_{001},r_{001},\theta_{001})$.} namely:
\begin{align*}
& (u_{10}-r_{10})v+(w-u_{10})q_{10}=0,\\
& v_{10} (v-q_{10})+r_{10}-w=0,\\
& (w_{10}-p)(v-q_{10})-q(w-r_{10})=c-a,\\
&\phi_{10}(v-q_{10})+\theta_{10}-\phi,
\intertext{and}
& (u_{01}-r_{01})v+(w-u_{01})q_{01}=0,\\
& v_{01} (v-q_{01})+r_{01}-w=0,\\
& (w_{01}-p)(v-q_{01})-q(w-r_{01})=c-a,\\
&\phi_{01}(v-q_{01})+\theta_{01}-\phi.
\end{align*}
\item Continuum limits. Using the above B\"acklund transformation to derive solutions of system  \eqref{sys-stair-equiv-t0} and, then, considering the continuum limits of these solutions, we can study the behaviour of the solutions of the corresponding Boussinesq-type system of PDEs.
\end{enumerate}

\section*{Acknowledgements}
This work was carried out within the framework of the State Programme of the Ministry of Education and Science of the Russian Federation, project No. 1.13560.2019/13.1. I would like to thank Dr. T.E. Kouloukas for numerous, useful discussions. I also acknowledge financial support from the London Mathematical Society (short visit grant LMS scheme-2 Ref. 21717). During my visit to the UK, I benefited from talking with Dr. G. Grahovksi, Dr. G. Papamikos, Dr. P. Adamopoulou and Dr. A. Doikou, so I would like to thank them for all the useful discussions and the hospitality. Many thanks to Dr. P. Kassotakis for some comments on the literature. Finally, I would like to thank the anonymous referee for their comments which helped to revise the text in a more motivating manner.

\appendix
\section{Proof of theorem \ref{YBmap}}\label{Appendix-A}
Throughout this proof we shall be using the following ``tilde-hat'' notation. 
\begin{align*}
S^{23}_{b,c}((\pmb{x},\pmb{\chi}),(\pmb{y},\pmb{\psi}),(\pmb{z},\pmb{\zeta}))&=((\pmb{x},\pmb{\chi}),(\tilde{\pmb{y}},\tilde{\pmb{\psi}}),(\tilde{\pmb{z}},\tilde{\pmb{\zeta}}));\\
S^{13}_{a,c}\circ S^{23}_{b,c}((\pmb{x},\pmb{\chi}),(\pmb{y},\pmb{\psi}),(\pmb{z},\pmb{\zeta}))&=((\tilde{\pmb{x}},\tilde{\pmb{\chi}}),(\tilde{\pmb{y}},\tilde{\pmb{\psi}}),(\tilde{\tilde{\pmb{z}}},\tilde{\tilde{\pmb{\zeta}}}));\\
S^{12}_{a,b}\circ S^{13}_{a,c}\circ S^{23}_{b,c}((\pmb{x},\pmb{\chi}),(\pmb{y},\pmb{\psi}),(\pmb{z},\pmb{\zeta}))&=((\tilde{\tilde{\pmb{x}}},\tilde{\tilde{\pmb{\chi}}}),(\tilde{\tilde{\pmb{y}}},\tilde{\tilde{\pmb{\psi}}}),(\tilde{\tilde{\pmb{z}}},\tilde{\tilde{\pmb{\zeta}}})),
\end{align*}
according to right side of the Yang-Baxter equation. Now, according to the right side of the Yang-Baxter equation, 
\begin{align*}
S^{12}_{a,b}((\pmb{x},\pmb{\chi}),(\pmb{y},\pmb{\psi}),(\pmb{z},\pmb{\zeta}))&=((\hat{\pmb{x}},\hat{\pmb{\chi}}),(\hat{\pmb{y}},\hat{\pmb{\psi}}),(\pmb{z},\pmb{\zeta}));\\
S^{13}_{a,c}\circ S^{12}_{a,b}((\pmb{x},\pmb{\chi}),(\pmb{y},\pmb{\psi}),(\pmb{z},\pmb{\zeta}))&=((\hat{\hat{\pmb{x}}},\hat{\hat{\pmb{\chi}}}),(\hat{\pmb{y}},\hat{\pmb{\psi}}),(\hat{\pmb{z}},\hat{\pmb{\zeta}}));\\
S^{23}_{b,c}\circ S^{13}_{a,c}\circ S^{12}_{a,b}((\pmb{x},\pmb{\chi}),(\pmb{y},\pmb{\psi}),(\pmb{z},\pmb{\zeta}))&=((\hat{\hat{\pmb{x}}},\hat{\hat{\pmb{\chi}}}),(\hat{\hat{\pmb{y}}},\hat{\hat{\pmb{\psi}}}),(\hat{\hat{\pmb{z}}},\hat{\hat{\pmb{\zeta}}})).
\end{align*}

Next, we apply the left part of the Yang-Baxter equation to the product
\begin{align}
\mathcal{L}_c(z_1,z_2,z_3,z_4,&\zeta_1,\zeta_2)\mathcal{L}_b(y_1,y_2,y_3,y_4,\psi_1,\psi_2)\mathcal{L}_a(x_1,x_2,x_3,x_4,\chi_1,\chi_2)=\nonumber\\
&\mathcal{L}_c(z_1,z_2,z_3,z_4,\zeta_1,\zeta_2)\mathcal{L}_a(\tilde{x}_1,\tilde{x}_2,y_3,y_4,\tilde{\chi}_1,\psi_2)\mathcal{L}_b(x_1,x_2,\tilde{y}_3,\tilde{y}_4,\chi_1,\tilde{\psi}_2)=\nonumber\\
&\mathcal{L}_a(\tilde{\tilde{x}}_1,\tilde{\tilde{x}}_2,z_3,z_4,\tilde{\tilde{\chi}}_1,\zeta_2)\mathcal{L}_c(\tilde{x}_1,\tilde{x}_2,\tilde{z}_3,\tilde{z}_4,\tilde{\chi}_1,\tilde{\zeta}_2)\mathcal{L}_b(x_1,x_2,\tilde{y}_3,\tilde{y}_4,\chi_1,\tilde{\psi}_2)=\nonumber\\
&\mathcal{L}_a(\tilde{\tilde{x}}_1,\tilde{\tilde{x}}_2,z_3,z_4,\tilde{\tilde{\chi}}_1,\zeta_2)\mathcal{L}_b(\tilde{x}_1,\tilde{x}_2,\tilde{z}_3,\tilde{z}_4,\tilde{\chi}_1,\tilde{\zeta}_2)\mathcal{L}_c(x_1,x_2,\tilde{\tilde{z}}_3,\tilde{\tilde{z}}_4,\chi_1,\tilde{\tilde{\zeta}}_2),
\end{align}
where we have used \eqref{Lax-YB-G} consecutively. Furthermore, applying the right part of the Yang-Baxter equation to the product on the same product,
\begin{align}
\mathcal{L}_c(z_1,z_2,z_3,z_4,&\zeta_1,\zeta_2)\mathcal{L}_b(y_1,y_2,y_3,y_4,\psi_1,\psi_2)\mathcal{L}_a(x_1,x_2,x_3,x_4,\chi_1,\chi_2)=\nonumber\\
&\mathcal{L}_b(\hat{y}_1,\hat{y}_2,z_3,z_4,\hat{\psi}_1,\zeta_2)\mathcal{L}_c(y_1,y_2,\hat{z}_3,\hat{z}_4,\psi_1,\hat{\zeta}_2)\mathcal{L}_a(x_1,x_2,x_3,x_4,\chi_1,\chi_2)=\nonumber\\
&\mathcal{L}_b(\hat{y}_1,\hat{y}_2,z_3,z_4,\hat{\psi}_1,\zeta_2)\mathcal{L}_a(\hat{x}_1,\hat{x}_2,\hat{z}_3,\hat{z}_4,\hat{\chi}_1,\hat{\zeta}_2)\mathcal{L}_c(x_1,x_2,\hat{\hat{z}}_3,\hat{\hat{z}}_4,\chi_1,\hat{\hat{\zeta}}_2)=\nonumber\\
&\mathcal{L}_a(\hat{\hat{x}}_1,\hat{\hat{x}}_2,z_3,z_4,\hat{\hat{\chi}}_1,\zeta)\mathcal{L}_b(\hat{x}_1,\hat{x}_2,\hat{z}_3,\hat{z}_4,\hat{\chi}_1,\hat{\zeta}_2)\mathcal{L}_c(x_1,x_2,\hat{\hat{z}}_3,\hat{\hat{z}}_4,\chi_1,\hat{\hat{\zeta}}_2).
\end{align}
We need to show that the matrix trifactorisation problem:
\begin{eqnarray}
&\mathcal{L}_a(\tilde{\tilde{x}}_1,\tilde{\tilde{x}}_2,z_3,z_4,\tilde{\tilde{\chi}}_1,\zeta_2)\mathcal{L}_b(\tilde{x}_1,\tilde{x}_2,\tilde{z}_3,\tilde{z}_4,\tilde{\chi}_1,\tilde{\zeta}_2)\mathcal{L}_c(x_1,x_2,\tilde{\tilde{z}}_3,\tilde{\tilde{z}}_4,\chi_1,\tilde{\tilde{\zeta}}_2)&=\nonumber\\
&\mathcal{L}_a(\hat{\hat{x}}_1,\hat{\hat{x}}_2,z_3,z_4,\hat{\hat{\chi}}_1,\zeta_2)\mathcal{L}_b(\hat{x}_1,\hat{x}_2,\hat{z}_3,\hat{z}_4,\hat{\chi}_1,\hat{\zeta}_2)\mathcal{L}_c(x_1,x_2,\hat{\hat{z}}_3,\hat{\hat{z}}_4,\chi_1,\hat{\hat{\zeta}}_2),&\label{A-trifac}
\end{eqnarray}
implies
\begin{equation*}
\tilde{\tilde{x}}_i=\hat{\hat{x}}_i, \quad \tilde{\tilde{\chi}}_1=\hat{\hat{x}}_1, \quad \tilde{x}_i=\hat{x}_i, \quad \tilde{z}_j=\hat{z}_j, \quad \tilde{\chi}_1=\hat{\chi}_1,\quad \tilde{\tilde{z}}_j=\hat{\hat{z}}_j, \quad \tilde{\tilde{\zeta}}_2=\hat{\hat{\zeta}}_2,
\end{equation*}
where $i=1,2$ and $j=3,4$.

Indeed, equation \eqref{A-trifac} yields the following system of equations
\begin{subequations}\label{variety-G}
\allowdisplaybreaks
\begin{align}
&v_4-x_3v_3=y_4-x_3y_3,\label{variety-G-a}\\
&w_3-v_2=z_3-y_2,\label{variety-G-b}\\
&v_3-u_2=y_3-x_2,\label{variety-G-c}\\
&\eta_1+v_2\zeta_1=y_2\zeta_1+\psi_1,\label{variety-G-d}\\
&v_1+v_2\zeta_2=y_1+y_2z_2,\label{variety-G-e}\\
&\eta_2-\chi_2v_3=\psi_2-\chi_2y_3,\label{variety-G-f}\\
&\gamma_2+\chi_2(v_3w_3-w_4)-\eta_2w_3=\zeta_2+\chi_2(y_3z_3-z_4)-\psi_2z_3,\label{variety-G-g}\\
&\xi_1+u_2(v_2\zeta_1+\eta_1)+u_1\zeta_1=x_1\zeta_1+x_2(y_2\zeta_1+\psi_1)+x_1\zeta_1,\label{variety-G-h}\\
&u_1(z_2-x_3)+u_2(v_1-x_4+v_2z_2)-\xi_1\chi_2=x_1(z_2-x_3)+x_2(y_1-x_4+y_2z_2)-\chi_1\chi_2,\label{variety-G-i}\\
&w_3(v_4-z_1-x_3v_3)+w_4(x_3-z_2)-\zeta_1\gamma_2=z_3(y_4-z_1-x_3y_3)+z_4(x_3-z_2)-\zeta_1\zeta_2,\label{variety-G-j}\\
&v_3(x_4-v_1)-v_2(v_4-z_1)-\eta_1\eta_2=y_3(x_4-y_1)-y_2(y_4-z_1)-\psi_1\psi_2,\label{variety-G-k}\\
&w_4-u_1-v_3w_3+u_2w_3-u_2v_2=z_4-x_1-y_3z_3+x_2z_3-x_2y_2,\label{variety-G-l}\\
&w_3\left[(v_1-x_4)v_3+v_2v_4+\eta_1\eta_2-b\right]+v_2(c-z_1w_3-z_2w_4-\zeta_1\gamma_2)+(x_4-v_1)w_4-\eta_1\gamma_2=\nonumber\\
&z_3\left[(y_1-x_4)y_3+y_2y_4+\psi_1\psi_2-b\right]+y_2(c-z_1z_3-z_2z_4-\zeta_1\zeta_2)+(x_4-y_1)z_4-\psi_1\zeta_2,\label{variety-G-m}\\
&u_1(z_1-v_4)+u_2\left[b-v_1v_3+(z_1-v_4)v_2-\eta_1\eta_2\right]-v_3(a-u_1x_3-u_2x_4-\xi_1\chi_2)-\xi_1\eta_2=\nonumber\\
&x_1(z_1-y_4)+x_2\left[b-y_1y_3(z_1-y_4)y_2-\psi_1\psi_2\right]-y_3(a-x_1x_3-x_2x_4-\chi_1\chi_2)-\chi_1\psi_2,\label{variety-G-n}\\
&u_1\left[c+(v_4-z_1)w_3-\zeta_1\gamma_2\right]+(a-u_1x_3-u_2x_4-\xi_1\chi_2)(v_3w_3-w_4)+\xi_1(\eta_2w_3-\gamma_2)+\nonumber\\
&u_2\left[w_3(v_1v_3+v_2v_4+\eta_1\eta_2-b)-v_1w_4+v_2(c-z_1z_3-z_2w_4-\zeta_1\gamma_2)-\eta_1\gamma_2\right]=\nonumber\\
&x_1\left[c+(y_4-z_1)z_3-z_2z_4-\zeta_1\zeta_2\right]+(a-x_1x_3-x_2x_4-\chi_1\chi_2)(y_3z_3-z_4)+\chi_1(\psi_2z_3-\zeta_2)+\nonumber\\
&x_2\left[z_3(y_1y_4+y_2y_4+\psi_1\psi_2-b)-y_1z_4+y_2(c-z_1z_3-z_2z_4-\zeta_1\zeta_2)-\psi_1\psi_2\right].\label{variety-G-o}
\end{align}
\end{subequations}

Using \eqref{variety-G-a}-\eqref{variety-G-f}, we express in terms of ``$u_1-x_1$'', ``$u_2-x_2$'' and ``$v_2-y_2$'' all variables $v_1$, $v_3$, $v_4$, $w_3$, $\eta_1$ and $\eta_2$, namely
\begin{subequations}\label{variety-G'}
\begin{align}
v_1&=y_1-(v_2-y_2)z_2,\label{variety-G'-a}\\
v_3&=y_3+u_2-x_2,\label{variety-G'-b}\\
v_4&=y_4+x_3(u_2-x_2),\label{variety-G'-c}\\
w_3&=z_3+v_2-y_2,\label{variety-G'-d}\\
\eta_1&=y_1-(v_2-y_2)z_2,\label{variety-G'-e}\\
\eta_2&=\psi_2+(u_2-x_2)\chi_2.\label{variety-G'-f}
\end{align}
\end{subequations}
Now, relation \eqref{variety-G-l}, using \eqref{variety-G'-b} and \eqref{variety-G'-d}, implies
\begin{equation}\label{w4}
w_4=z_4+u_1-x_1+y_3(v_2-y_2)+v_2(u_2-x_2),
\end{equation}
whereas from \eqref{variety-G-h}, with use of \eqref{variety-G'-e}, follows that
\begin{equation}\label{xi}
\xi_1=\chi_1+(x_1-u_1)\zeta_1+(x_2-u_2)(\psi_1+y_2\zeta_1).
\end{equation}
Moreover, from \eqref{variety-G-g}, in view of \eqref{variety-G'-d} and \eqref{variety-G-f}, we obtain
\begin{equation}\label{gamma2}
\gamma_2=\zeta_2+\chi_2(u_1-x_1)+v_2(u_1-x_2)\chi_2+\psi_2(v_2-y_2),
\end{equation}
where we have made use of \eqref{w4}. Additionally, with use of \eqref{variety-G'-a} and \eqref{xi}, equation \eqref{variety-G-i} implies an expression for $u_1$ in terms of ``$u_2-x_2$'', namely
\begin{equation}\label{u1}
u_1=x_1-(u_2-x_2)\frac{y_1+y_2z_2-x_4+(\psi_1+y_2\zeta_1)\chi_2}{z_2-x_3+\zeta_1\chi_2}.
\end{equation}
Additionally, with the help of \eqref{variety-G'-a}, \eqref{variety-G'-b}, \eqref{variety-G'-c}, \eqref{variety-G'-e} and \eqref{variety-G'-f}, it follows from \eqref{variety-G-k} that $v_2$ can be expressed as
\begin{equation}\label{v2}
v_2=y_2+(u_2-x_2)\frac{y_1+y_2x_3-x_4+\psi_1\chi_2}{z_1+z_2y_3-y_4+\zeta_1\psi_2+(u_2-x_2)(z_2-x_3+\zeta_1\chi_2)}.
\end{equation}

Equation \eqref{variety-G-n} can be rewritten in the form
\begin{align}\label{auxil}
&a(y_3-v_3)+b(u_2-x_2)+v_3\left[u_1x_3+u_2x_4+\xi_1\chi_2\right]-u_2\left[v_1v_3+v_2(v_4-z_1)+\eta_1\eta_2\right]-\xi_1\eta_2=\nonumber\\
&y_3(x_1x_3+x_2x_4+\chi_1\chi_2)-x_2\left[y_1y_3+y_2(y_4-z_1)+\psi_1\psi_2\right]+x_1(z_1-y_4)-\chi_1\psi_2.
\end{align}
Using equations \eqref{variety-G-i} and \eqref{variety-G-k}, the quantities in square brackets in the left-hand side part of the above equation, can be substituted by the following expressions
\begin{subequations}\label{exprSB}
\begin{align}
&u_1x_3+u_2x_4+\xi_1\chi_2=(u_1-x_1)z_2+(u_2-x_2)(y_1+y_2z_2)+x_1x_3+x_2x_4+\chi_1\chi_2,\\
&v_1v_3+v_2(v_4-z_1)+\eta_1\eta_2=x_4(v_3-y_3)+y_1y_3+y_2y_4-y_2z_1+\psi_1\psi_2,
\end{align}
\end{subequations}
where we have used \eqref{variety-G-e}. 

After a little manipulation, equation \eqref{auxil}, with use of equations \eqref{exprSB}, can be written as
\begin{subequations}\label{expression}
\begin{align}
&(u_2-x_2)\left[b-a+y_2(y_3z_2-y_4+z_1+\zeta_1\psi_2)\right]+(u_1-x_1)(z_1-y_4+y_3z_2+\zeta_1\psi_2)+\nonumber\\
&(u_1-x_1)(u_2-x_2)(z_2-x_3+\zeta_1\chi_2)+(u_2-x_2)^2\left[y_1-x_4+y_2z_2+(\psi_1+y_2\zeta_1)\chi_2\right].
\end{align}
\end{subequations}
But, due to \eqref{u1}, $(u_1-x_1)(u_2-x_2)(z_2-x_3+\zeta_1\chi_2)=-(u_2-x_2)^2(y_1+y_2z_2-x_4+\left[\psi_1+y_2\zeta_1)\chi_2\right]$, and with this observation, equation \eqref{expression} can be factorised as
\begin{equation}
(u_2-x_2)\frac{(b-a)(z_2-x_3+\zeta_1\chi_2)-(y_1+y_2x_3-x_4+\psi_1\chi_2)(z_1+y_3z_2-y_4+\zeta_1\psi_2)}{z_2-x_3+\zeta_1\chi_2}=0,
\end{equation}
which implies $u_2=x_2$.

With $u_2=x_2$, we obtain
\begin{equation*}
u_1=x_1,\quad v_2=y_2,\quad v_3=y_3,\quad v_4=y_4,\quad \text{and} ~~~\eta_2=\psi_2,
\end{equation*}
from \eqref{u1}, \eqref{v2}, \eqref{variety-G'-b}, \eqref{variety-G'-c} and \eqref{variety-G'-f}, respectively, and using the above, it follows that
\begin{equation*}
v_1=y_1,\quad w_3=z_3,\quad w_4=z_4,\quad \xi_1=\chi_1, \quad \eta_1=\psi_1 \quad \text{and} ~~~ \gamma_2=z_2,
\end{equation*}
in view of \eqref{variety-G'-a}, \eqref{variety-G'-d}, \eqref{w4}, \eqref{xi}, \eqref{variety-G'-e} and \eqref{gamma2}.

Now, map \eqref{map-YB-BSQ-G}-\eqref{YB-BSQ-G} shares the same invariants $I_i$, $i=1,\ldots,4$ as  \eqref{YB-BSQ}-\eqref{YB-Boussinesq}, which can be verified by straightforward calculation. Moreover,
\begin{equation}
\xi_i\eta_i = \psi_i\chi_i=-\chi_i\psi_i,~~ i=1,2,
\end{equation}
namely the quantities $\xi_i\eta_i$, $i=1,2$, constitute anti-invariants of the map.

Finally, the bosonic limit can be calculated by substituting $\chi_i \rightarrow 0, \psi_i \rightarrow 0$, $i=1,2$, to \eqref{YB-BSQ-G}, and the result will be map \eqref{YB-BSQ}-\eqref{YB-Boussinesq}.

\section{Proof of theorem \ref{3D-BSQ}}\label{Appendix-B}
We write system \eqref{sys-stair-equiv-t0} on the bottom face of the cube in Figure \ref{sqr-cube}, namely
\begin{align}
p_{110}&=\frac{r_{100}q_{010}-r_{010}q_{100}}{q_{010}-q_{100}},\nonumber\\
q_{110}&=\frac{r_{010}-r_{100}}{q_{010}-q_{100}},\nonumber\\
r_{110}&=\frac{b-a+q(r_{010}-r_{100})+p(q_{010}-q_{100})}{q_{010}-q_{100}},\label{sys-btm}\\
\theta_{110}&=\frac{\theta_{010}-\theta_{100}}{q_{010}-q_{100}}.\nonumber
\end{align}
Moreover, according to front face of the cube, the system \eqref{sys-stair-equiv-t0} is written as
\begin{align}
p_{101}&=\frac{r_{100}q_{001}-r_{001}q_{100}}{q_{001}-q_{100}},\nonumber\\
q_{101}&=\frac{r_{001}-r_{100}}{q_{001}-q_{100}},\nonumber\\
r_{101}&=\frac{c-a+q(r_{001}-r_{100})+p(q_{001}-q_{100})}{q_{001}-q_{100}},\label{sys-front}\\
\theta_{101}&=\frac{\theta_{001}-\theta_{100}}{q_{001}-q_{100}}.\nonumber
\end{align}
whereas on the left side of the cube is expressed as
\begin{align}
p_{011}&=\frac{r_{010}q_{001}-r_{001}q_{010}}{q_{001}-q_{010}},\nonumber\\
q_{011}&=\frac{r_{001}-r_{010}}{q_{001}-q_{010}},\nonumber\\
r_{011}&=\frac{c-b+q(r_{001}-r_{010})+p(q_{001}-q_{010})}{q_{001}-q_{010}}+p,\label{sys-left}\\
\theta_{011}&=\frac{\theta_{001}-\theta_{010}}{q_{001}-q_{010}}.\nonumber
\end{align}

There are three different ways to obtain the values $(p_{111},q_{111},r_{111},\theta_{111})$: A) Shifting system \eqref{sys-btm} in the $k$-direction, using \eqref{sys-front} and \eqref{sys-left} to replace the ``101'' and ``011'' values, B) shifting system \eqref{sys-front} in the $m$-direction, using \eqref{sys-btm} and \eqref{sys-left} to replace the ``110'' and ``011'' values, and C) shifting \eqref{sys-left} in the $n$-direction, using \eqref{sys-btm} and \eqref{sys-front} to replace the ``110'' and ``101'' values, respectively.

A) The final ``111'' values we obtain are
\begin{subequations}\label{111-from-btm}
\begin{align}
p_{111}&=\frac{p\mathcal{A}(r_{100},r_{010},r_{001},q_{100},q_{010},q_{001})-\mathcal{B}_{a,b,c}(r_{100},r_{010},r_{001})}{\mathcal{A}(r_{100},r_{010},r_{001},q_{100},q_{010},q_{001})},\label{111-from-btm-a}\\
q_{111}&=\frac{q\mathcal{A}(r_{100},r_{010},r_{001},q_{100},q_{010},q_{001})+\mathcal{B}_{a,b,c}(q_{100},q_{010},q_{001})}{\mathcal{A}(r_{100},r_{010},r_{001},q_{100},q_{010},q_{001})},\label{111-from-btm-b}\\
r_{111}&=\frac{(p_{100}+qq_{100})\mathcal{A}(r_{100},r_{010},r_{001},q_{100},q_{010},q_{001})+\mathcal{C}_{a,b,c}(q_{100},q_{010},q_{001})}{\mathcal{A}(r_{100},r_{010},r_{001},q_{100},q_{010},q_{001})},\label{111-from-btm-c}\\
\theta_{111}&=\frac{\mathcal{A}(\theta_{100},\theta_{010},\theta_{001},q_{100},q_{010},q_{001})}{\mathcal{A}(r_{100},r_{010},r_{001},q_{100},q_{010},q_{001})},\label{111-from-btm-d}
\end{align}
\end{subequations}
where 
\begin{align*}
&\mathcal{A}(x_{100},x_{010},x_{001},q_{100},q_{010},q_{001}):=(x_{001}-x_{010})(q_{001}-q_{100})-(x_{001}-x_{100})(q_{001}-q_{010}),\\
&\mathcal{B}_{a,b,c}(x_{100},x_{010},x_{001}):=a(x_{001}-x_{010})+b(x_{100}-x_{001})+c(x_{010}-x_{100}),\\
&\mathcal{C}_{a,b,c}(x_{100},x_{010},x_{001}):=ax_{100}(x_{001}-x_{010})+bx_{010}(x_{100}-x_{001})+cx_{001}(x_{010}-x_{100}).
\end{align*}
In the derivation of $r_{111}$  in \eqref{111-from-btm-c} we used the relation $p_{001}=p_{100}+q(q_{100}-q_{001})$, which is derived from the system \eqref{aux-sys} written on the front side of the cube, namely the system
\begin{align*}
p_{100}&=r-qq_{100},\\
p_{010}&=r-qq_{010}.
\end{align*}

B) In this case, the final ``111'' values read
\begin{subequations}\label{111-from-front}
\begin{align}
p_{111}&=\frac{p\mathcal{A}(r_{001},r_{100},r_{010},q_{001},q_{100},q_{010})-\mathcal{B}_{a,b,c}(r_{100},r_{010},r_{001})}{\mathcal{A}(r_{001},r_{100},r_{010},q_{001},q_{100},q_{010})},\label{111-from-front-a}\\
q_{111}&=\frac{q\mathcal{A}(r_{001},r_{100},r_{010},q_{001},q_{100},q_{010})+\mathcal{B}_{a,b,c}(q_{100},q_{010},q_{001})}{\mathcal{A}(r_{001},r_{100},r_{010},q_{001},q_{100},q_{010})},\label{111-from-front-b}\\
r_{111}&=\frac{(p_{100}+qq_{100})\mathcal{A}(r_{001},r_{100},r_{010},q_{001},q_{100},q_{010})+\mathcal{C}_{a,b,c}(q_{100},q_{010},q_{001})}{\mathcal{A}(r_{001},r_{100},r_{010},q_{001},q_{100},q_{010})},\label{111-from-front-c}\\
\theta_{111}&=\frac{\mathcal{A}(\theta_{001},\theta_{100},\theta_{010},q_{001},q_{100},q_{010})}{\mathcal{A}(r_{001},r_{100},r_{010},q_{001},q_{100},q_{010})},\label{111-from-front-d}
\end{align}
\end{subequations}
where we have used the relation $p_{010}=p_{100}+q(q_{100}-q_{010})$, derived from the system 
\begin{align*}
p_{100}&=r-qq_{100},\\
p_{010}&=r-qq_{010},
\end{align*}
i.e. system \eqref{aux-sys} expressed on the bottom side of the cube.

C) Finally, the ``111'' values in this case are given by
\begin{subequations}\label{111-from-left}
\begin{align}
p_{111}&=\frac{p\mathcal{A}(r_{010},r_{001},r_{100},q_{010},q_{001},q_{100})-\mathcal{B}_{a,b,c}(r_{100},r_{010},r_{001})}{\mathcal{A}(r_{010},r_{001},r_{100},q_{010},q_{001},q_{100})},\label{111-from-left-a}\\
q_{111}&=\frac{q\mathcal{A}(r_{010},r_{001},r_{100},q_{010},q_{001},q_{100})+\mathcal{B}_{a,b,c}(q_{100},q_{010},q_{001})}{\mathcal{A}(r_{010},r_{001},r_{100},q_{010},q_{001},q_{100})},\label{111-from-left-b}\\
r_{111}&=\frac{(p_{100}+qq_{100})\mathcal{A}(r_{010},r_{001},r_{100},q_{010},q_{001},q_{100})+\mathcal{C}_{a,b,c}(q_{100},q_{010},q_{001})}{\mathcal{A}(r_{010},r_{001},r_{100},q_{010},q_{001},q_{100})},\label{111-from-left-c}\\
\theta_{111}&=\frac{\mathcal{A}(\theta_{010},\theta_{001},q_{010},\theta_{100},q_{001},q_{100})}{\mathcal{A}(r_{010},r_{001},r_{100},q_{010},q_{001},q_{100})}.\label{111-from-left-d}
\end{align}
\end{subequations}
The values \eqref{111-from-btm}, \eqref{111-from-front} and \eqref{111-from-left} coincide due to Lemma \ref{inv-A}.

\end{document}